\newtheorem{proposition}{Proposition}
\newtheorem{definition}{Definition}
\newtheorem{lemma}{Lemma}
\newtheorem{theorem}{Theorem}
\newtheorem{corollary}{Corollary}
\newtheorem{example}{Example}
\newcommand*{\centerfloat}{%
	\parindent \z@
	\leftskip \z@ \@plus 1fil \@minus \textwidth
	\rightskip\leftskip
	\parfillskip \z@skip}
\begin{document}
%
\title{Communication Efficient Secret Sharing}
%
%
%

\author{Wentao~Huang,
        Michael~Langberg,~\IEEEmembership{senior member,~IEEE,}
        Joerg~Kliewer,~\IEEEmembership{senior member,~IEEE,}
        and~Jehoshua~Bruck,~\IEEEmembership{Fellow,~IEEE}
\thanks{W. Huang and J. Bruck are with the Department
of Electrical Engineering, California Institute of Technology, Pasadena,
CA, 91125 USA (e-mail:\{whuang,bruck\}@caltech.edu).}
\thanks{M. Langberg is with the Department of Electrical Engineering, The State University of New York at Buffalo, Buffalo, NY 14260 USA (email: mikel@buffalo.edu)}
\thanks{J. Kliewer is with the Department of Electrical and Computer Engineering, New Jersey Institute of Technology, Newark, NJ 07102 USA (email: jkliewer@njit.edu)}}

\maketitle

\begin{abstract}
A secret sharing scheme is a method to store information securely and reliably. Particularly, in a \emph{threshold secret sharing scheme}, a secret is encoded into $n$ shares, such that any set of at least $t_1$ shares suffice to decode the secret, and any set of at most $t_2 < t_1$ shares reveal no information about the secret. Assuming that each party holds a share and a user wishes to decode the secret by receiving information from a set of parties; the question we study is how to minimize the amount of communication between the user and the parties. We show that the necessary amount of communication, termed ``decoding bandwidth'', decreases as the number of parties that participate in decoding increases. We prove a tight lower bound on the decoding bandwidth, and construct secret sharing schemes achieving the bound. Particularly, we design a scheme that achieves the optimal decoding bandwidth when $d$ parties participate in decoding, universally for all $t_1 \le d \le n$. The scheme is based on Shamir's secret sharing scheme and preserves its simplicity and efficiency. In addition, we consider secure distributed storage where the proposed communication efficient secret sharing schemes further improve disk access complexity during decoding.
\end{abstract}

\begin{IEEEkeywords}
Security, secret sharing, communication bandwidth, distributed storage, Reed-Solomon codes.
\end{IEEEkeywords}

%
\IEEEpeerreviewmaketitle

\section{Introduction}
Consider the scenario that $n$ parties wish to store a secret securely and reliably. 
To this end, a dealer distributes the secret into $n$ shares, i.e., one share for each party, such that 1) (reliability) a collection $\mathcal{A}$ of ``authorized'' subsets of the parties can decode the secret,  and 2) (secrecy) 
a collection $\mathcal{B}$ of ``blocked'' subsets of the parties cannot collude to deduce any information about the secret.  A scheme to distribute the secret into shares with respect to \emph{access structure} $(\mathcal{A},\mathcal{B})$ is called a secret sharing scheme, initially studied in the seminal works by Shamir \cite{Shamir:1979vo} and Blakley \cite{Blakley:1979th}. A secret sharing scheme is \emph{perfect} if a subset of parties is either authorized or blocked, i.e., $\mathcal{A} \cup \mathcal{B} = 2^{\{1,...,n\}}$. The scheme is referred to as a \emph{ramp} scheme if it is not prefect. 
 Besides its application in distributed storage of secret data, secret sharing became a fundamental cryptographic primitive and is used as a building block in numerous secure protocols \cite{Beimel:2011fo}.

We focus on secret sharing schemes for the \emph{threshold} access structure, i.e., $\mathcal{A}$ contains all subsets of $\{1,...,n\}$ of size at least $n-r$, and $\mathcal{B}$ contains all subsets of $\{1,...,n\}$ of size at most $z$.  In other words, the secret can be decoded in the absence of any $r$ parties, and any $z$ parties cannot collude to deduce any information about the secret.
The threshold access structure is particularly important in practice, because for this case, space and computationally efficient secret sharing schemes are known. Specifically, Shamir \cite{Shamir:1979vo} constructs an elegant and efficient perfect threshold scheme using the idea of polynomial interpolation. Shamir's scheme is later shown to be closely related to Reed-Solomon codes \cite{McEliece:1981uy} and is generalized to ramp schemes in \cite{Blakley84, Yamamoto86}, which have significantly better space efficiency, i.e., rate, than the original perfect scheme. Shamir's scheme and the generalized ramp schemes achieve optimal usage of storage space, in the sense that fixing the size of the shares, the schemes store a secret of maximum size. The schemes are computationally efficient as decoding the secret is equivalent to polynomial interpolation. An example of Shamir's ramp scheme is shown in Figure~\ref{fig:shaone}. Other threshold secret sharing schemes and generalizations of Shamir's scheme may be found in \cite{Karnin:1983ft, Yang:2004jd, Lai:2004wy, Kurihara08}. The reader is also referred to \cite{Beimel:2011fo} for an up-to-date survey on secret sharing.
 \begin{figure}[htbp]
 	\small
        \centerfloat
        		\begin{tabular}{c|c|c|c|c|c|c}
		Party 1 & Party 2 & Party 3 & Party 4 & Party 5 & Party 6 & Party 7  \\
		\hline
		$f(1)=$ & $f(2)=$ & $f(3)=$ & $f(4)=$ & $f(5)=$ & $f(6)=$ & $f(7)=$\\
		$m_1+m_2+k$ &  $m_1 + 2m_2 + 4k$ &  $m_1 + 3m_2 + 9k$ & $m_1 + 4m_2 + 5k$ & $m_1+5m_2+3k$ & $m_1 + 6m_2 +3k$ & $m_1+7m_2+5k$ 
		\end{tabular}
                \caption{Shamir's scheme (ramp version) for $n=7, r=4, z=1$, with symbols over $\mathbb{F}_{11}$.  The scheme stores a secret of two symbols, denoted by $m_1, m_2$. Let $k$ be a uniformly and independently distributed random variable. $f(x)$ is the polynomial $m_{1} + m_{2}x + kx^2$. Note that the share stored by any single party is independent of the secret because it is padded by $k$, and that the secret can be decoded from the shares stored by any three parties by polynomial interpolation. }
                \label{fig:shaone}
\end{figure}
 

In addition to space and computational efficiency, this paper studies the \emph{communication efficiency} for secret sharing schemes. Consider the scenario that a user wishes to decode the secret by downloading information from the parties that are available. Referring to the amount of information downloaded by the user as the \emph{decoding bandwidth}, a natural question is to address the minimum decoding bandwidth that allows decoding. It is of practical interest to design secret sharing schemes that achieve a small decoding bandwidth, or in other words, that require communicating only a small amount of information during decoding. In such a case, decoding will be completed in a timely manner and  the communication resource will be more efficiently utilized. 

In many existing  secret sharing schemes, e.g., \cite{Shamir:1979vo, McEliece:1981uy, Karnin:1983ft, Blakley84, Yamamoto86, Yang:2004jd, Lai:2004wy, Kurihara08}, a common practice in decoding is that the user will communicate with a minimum set of parties, i.e., exactly $n-r$ parties (even if $d > n-r$ parties are available) and download the whole share stored by these parties. Wang and Wong \cite{Wang08} show that this paradigm is not optimal in terms of communication and that the decoding bandwidth can be reduced if the user downloads only part of the share from each of the $d>n-r$ available parties. Specifically, given $d$, for any perfect threshold secret sharing scheme, \cite{Wang08} derive a lower bound on the decoding bandwidth when exactly $d$ parties participate in decoding, and design a perfect scheme that achieves the lower bound. The field size of the scheme is slightly improved in \cite{Zhang2012}. However, two interesting and important problems remain open: 1) the schemes in \cite{Wang08,Zhang2012} achieve the lower bound on decoding bandwidth when the number of available parties $d$ equals a single specific value, and do not achieve the bound if $d$ takes other values. This raises the question  whether the lower bound is uniformly tight, or in other words, it is possible to design a single scheme that achieves the lower bound universally for all $d$ in the range of $[n-r, n]$. 2) The results in \cite{Wang08,Zhang2012} target the case of prefect secret sharing schemes. It is well known that for any perfect scheme, the size of each share is as large as the size of the secret \cite{Stinson92, Capocelli93}, i.e., the rate of a perfect scheme is at most $1/n$. Any scheme with a higher rate is necessarily a (non-perfect) ramp scheme, which raises the question of how to generalize the results and ideas to non-perfect schemes.
 Both problems are of practical importance as the first problem addresses the flexibility of a scheme in terms of decoding, and the second problem addresses the high-rate case which is a typical requirement in many practical applications. In this paper we settle both problems and construct (perfect and ramp) schemes of flexible rate that achieve the optimal decoding bandwidth universally. Similar to Shamir's scheme, our schemes are computationally efficient and have optimal space efficiency. 


\subsection{Motivating Example}\label{sec:mex}
\begin{figure}[!ht]
        \centering
        \begin{subfigure}[b]{1\textwidth}
 	\small
 	\centerfloat
 	\begin{tabular}{c|c|c|c|c|c|c}
 		Party 1 & Party 2 & Party 3 & Party 4 & Party 5 & Party 6 & Party 7  \\
 		\hline
 		$m_1+m_2+k_1$ &  $m_1 + 2m_2 + 4k_1$ &  $m_1 + 3m_2 + 9k_1$ & $m_1 + 4m_2 + 5k_1$ & $m_1+5m_2+3k_1$ & $m_1 + 6m_2 +3k_1$ & $m_1+7m_2+5k_1$  \\
 		$m_3+m_4+k_2$ &  $m_3 + 2m_4 + 4k_2$ &  $m_3 + 3m_4 + 9k_2$ & $m_3 + 4m_4 + 5k_2$ & $m_3+5m_4+3k_2$ & $m_3 + 6m_4 +3k_2$ & $m_3+7m_4+5k_2$  \\
 		$m_5+m_6+k_3$ &  $m_5 + 2m_6 + 4k_3$ &  $m_5 + 3m_6 + 9k_3$ & $m_5 + 4m_6 + 5k_3$ & $m_5+5m_6+3k_3$ & $m_5 + 6m_6 +3k_3$ & $m_5+7m_6+5k_3$  
 	\end{tabular}
                \caption{Shamir's Scheme}
                \label{fig:sha}
        \end{subfigure}%
        \\
         \vspace{5mm}
        \begin{subfigure}[b]{1\textwidth}
 	\small
 	\centerfloat
 	\begin{tabular}{c|c|c}
 		Party 1  & $\cdots$ & Party 7  \\
 		\hline
 		$f(1)=k_1+m_1+m_2+m_3+m_4+m_5+m_6$  & $\cdots$ & $f(7) = k_1 + 7m_1 + 5m_2 + 2m_3 + 3m_4 + 10m_5 + 6m_6$ \\
 		$g(1)=k_2+m_4+m_5+m_6$ & $\cdots$ & $g(7)=k_2 + 7m_4 + 5m_5+ 2m_6$ \\
 		$h(1)= k_3 + m_3 + m_6$  & $\cdots$ & $h(7) = k_3 + 7m_3 + 5m_6$ 
 	\end{tabular}
                \caption{Proposed Scheme}
                \label{fig:ours}
        \end{subfigure}
        \caption{ \small Two secret sharing schemes for $n=7, r=4$ and $z=1$ over $\mathbb{F}_{11}$. Both schemes store a secret of six symbols ($m_1$, ..., $m_6$). In both schemes, $k_1, k_2, k_3$ are i.i.d. uniformly distributed random variables. Scheme (a) is Shamir's scheme (see Figure \ref{fig:shaone}) repeated three times. In scheme (b), $f(x)=k_1 + m_1 x +m_2 x^2 +m_3 x^3 + m_4 x^4 + m_5 x^5  + m_6 x^6$, $g(x) = k_2 + m_4 x + m_5 x^2 + m_6 x^3$, $h(x) = k_3 + m_3 x + m_6 x^2$, and party $i$ stores evaluations $f(i)$, $g(i)$ and $h(i)$. Note that in (b), if all 7 parties are available, then the secret can be decoded by downloading only one symbol $f(i)$ from each party $i$, and then interpolating $f(x)$. If any 4 parties are available, then the secret can be decoded in the following way. Download two symbols $f(i), g(i)$ from each available party $i$ and first interpolate $g(x)$, implying that all coefficients of $f(x)$ of degree larger than 3 are decoded. The remaining unknown part of $f(x)$ is a degree-3 polynomial and so we have enough evaluations of $f(x)$ to interpolate it, hence completely decoding the secret. Similarly, if any 3 parties are available, then the secret can be decoded in the following way. Download all three symbols $f(i), g(i), h(i)$ from each available node $i$ and  interpolate $h(x)$, which decodes the degree-3 coefficients of $f(x)$ and $g(x)$. Hence the remaining unknown part of $g(x)$ is a degree-2 polynomial and can be interpolated, which decodes the coefficients of $f(x)$ of degrees $4,5,6$. Hence the remaining unknown part of $f(x)$ is a degree-2 polynomial and can be interpolated, decoding the complete secret. This  shows that the scheme meets the reliability requirement. In fact, for $d=3,4,7$, scheme (b) achieves the optimal decoding bandwidth when $d$ parties participate in decoding. The secrecy of the scheme derives from the secrecy of Shamir's scheme, as each polynomials $f(x)$, $g(x)$ and $h(x)$ individually is an instance of Shamir's scheme, and we show that combining them still meets the secrecy requirement.  The construction is discussed in detail in Section \ref{sec:shar}.
}\label{fig:comp}
\end{figure}

Consider Shamir's ramp scheme in the example of Figure \ref{fig:shaone}, that stores 2 symbols securely and reliably for the setting $n=7, r=4$ and $z=1$. In order to decode the secret, a user needs to download 3 symbols from any 3 parties, 
and therefore the decoding bandwidth is 3 symbols. Now suppose the same scheme is repeated 3 times in order to store a secret of 6 symbols, as shown in Figure \ref{fig:sha}. Then to decode the secret, the decoding bandwidth is 9 symbols. 

We propose  a new scheme in Figure \ref{fig:ours} that also stores a secret of 6 symbols for the same setting, using the same amount of storage space, and over the same field size. In this scheme, if any 3 parties are available, then similar to Shamir's scheme, the secret can be decoded from the 9 symbols stored by the three parties. However, if any 4 parties are available, then the secret can be decoded by downloading 2 symbols from each available party. Therefore, the decoding bandwidth is improved to 8 symbols. If all 7 parties are available,  then the secret can be decoded by downloading only 1 symbol from each party and so the decoding bandwidth is further reduced to 7 symbols.

We use the examples in Figure \ref{fig:comp} to highlight several ideas to reduce the decoding bandwidth. Firstly, the amount of communication depends on the number of available parties. In fact the necessary amount of communication decreases strictly as the number of available parties increases.  Secondly, it is important to distribute multiple subshares (symbols) to a party (essentially using the ideas of array codes \cite{Blaum:1995uv, Blaum:1996ga}). In contrast, Shamir's scheme only distributes one symbol to each party except for trivial repetitions.
Thirdly, during decoding it is not always necessary to download the complete share stored by a party. In general, a party can preprocess its share and the user can download a function of the share.

Comparing to the schemes in \cite{Wang08,Zhang2012}, the scheme in the example is improved and generalized in the following aspects. 1) The proposed scheme achieves the optimal bandwidth more flexibly. Specifically, the schemes in \cite{Wang08,Zhang2012} achieve the optimal bandwidth for a single specific number of available parties. The proposed scheme is more flexible as it can be designed to allow flexibility in the number of available parties $d$.
In the example of Figure \ref{fig:ours} the scheme achieves the optimal bandwidth when $d=3,4,7$. In general, we can construct schemes that achieve the optimal bandwidth for all $n-r \le d \le n$. 2) The proposed scheme is more flexible in rate. 
Specifically, the (perfect) schemes in \cite{Wang08,Zhang2012} have rate exactly $1/n$. The proposed scheme in the example has rate $2/7 > 1/n = 1/7$. In general, we can construct schemes of arbitrary rate.

We also remark on an interesting analog between communication efficient secret sharing and the well-studied subject of regenerating codes \cite{Dimakis:2011iq, Tamo:2013hu, Rashmi:ii}. Consider a regenerating code of length $n$ that is able to correct  $r>1$ erasures. If only one erasure occurs, then compared to repairing from a minimum set of $n-r$ nodes, repairing from all the $n-1$ available nodes will significantly reduce the total amount of communication that occurs during the repair. In this sense, for both regenerating codes and communication efficient secret sharing, a key idea is to involve more available nodes/parties than the minimum required set during repair/decoding, for the purpose of reducing the repair/decoding bandwidth.

\subsection{Results}
In Section \ref{sec:lbnd}, we prove a tight information-theoretic lower bound on the decoding bandwidth, given a set of available parties $I \subset \{1, ..., n\}$. The bound implies that the decoding bandwidth decreases as $|I|$ increases. The lower bound applies to both perfect and ramp schemes and generalizes the lower bound in \cite{Wang08}. Particularly, we show that the overhead in communication for the case of $|I| = n$ is only a fraction $(n -r-z)/(n-z)$ of the communication overhead when $|I| = n-r$. 

In Section \ref{sec:shar}, we construct efficient secret sharing schemes using the ideas described in Section \ref{sec:mex}. Our construction utilizes Shamir's scheme and achieves the optimal decoding bandwidth universally for all $I \in \mathcal{A}$. Additionally, the construction preserves the simplicity of Shamir's scheme and is efficient in terms of both space and computation. Specifically, the scheme achieves optimal space efficiency, and requires the same field size as Shamir's scheme. Encoding and decoding the scheme is also similar to encoding and decoding Shamir's scheme. The scheme shows that our lower bound in Section  \ref{sec:lbnd} is uniformly tight. Interestingly, the scheme also generalizes the construction in a recent independent work \cite{Bitar15}. However, the flexibility of our framework allows improved efficiency in terms of computation, decoding delay and partial decoding. 


In Section \ref{sec:rs}, we construct another secret sharing scheme from Reed-Solomon codes. The scheme achieves the optimal decoding bandwidth  when $|I|=n$ and $|I|=n-r$. The decoder of the scheme has a simpler structure compared to the decoder of the previous scheme, and therefore is advantageous in terms of implementation. The scheme also offers a stronger level of reliability in that it allows decoding even if more than $r$ shares are partially lost. In Section \ref{sec:rdm} we present a scheme from random linear codes that achieves the optimal decoding bandwidth universally.

Finally, in the application of storage where each party is regarded as a disk, it is desirable to optimize the efficiency of disk operations. Our lower bound on the decoding bandwidth is naturally a lower bound on the number of symbol-reads from disks during decoding. In all of our schemes, the number of symbol-reads during decoding equals to the amount of communication. Therefore, our schemes are also optimal in terms of disk operations. In addition, by involving more than the minimum number of disks for decoding, our schemes balance the load at the disks and achieve a higher degree of parallelization.

\section{Secret Sharing Schemes}


Consider the problem of storing a secret message $\bm{m}$ securely and reliably into $n$ shares, so that 1) $\bm{m}$ can be recovered from any $n-r$ shares, and 2) any $z$ shares do not reveal any information about $\bm{m}$, i.e., they are statistically independent. Such a scheme is called a threshold secret sharing scheme, defined formally as follows. Let $\mathcal{Q}$ be a general $Q$-ary alphabet, i.e., $|\mathcal{Q}|=Q$. Denote by $[n] = \{1, ..., n\}$. For any index set $I \subset [n]$ and a vector $\bm{c} = (c_1, ..., c_n)$, denote by $\bm{c}_I = (c_i)_{ i \in I}$.
\begin{definition}\label{def:ss}
An $(n,k,r,z)_\mathcal{Q}$ secret sharing scheme consists of a randomized encoding function $F$ that maps a secret $\bm{m}  \in \mathcal{Q}^k$ to $\bm{c} = (c_1,...,c_n) = F(\bm{m}) \in \mathcal{Q}^n$, such that
\begin{itemize}
\item[1)] (Reliability) The secret $\bm{m}$ can be decoded from any $n-r$ shares (entries) of $\bm{c}$. This guarantees that $\bm{m}$ is recoverable in the loss of any $r$ shares. Formally,
\begin{align}\label{def:re}
H(\bm{m}|\bm{c}_I) = 0, \ \ \ \forall I \subset [n], \ |I| = n-r.
\end{align}
Therefore for any $I \subset [n]$, $|I| = n-r$, there exists a decoding function $D^*_I: \mathcal{Q}^{n-r} \to \mathcal{Q}^k$ such that $D^*_I(\bm{c}_I) = \bm{m}$.
\item[2)] (Secrecy) Any $z$ shares of $\bm{c}$ do not reveal any information about $\bm{m}$. This guarantees that $\bm{m}$ is secure if any $z$ shares are exposed to an eavesdropper. Formally,
\begin{align}\label{def:sr}
H(\bm{m}|\bm{c}_I) = H(\bm{m}), \ \ \ \forall I \subset [n], \  |I| = z.
\end{align}
\end{itemize}
\end{definition}

Define the \emph{rate} of a scheme to be $k/n$, which measures the space efficiency. The following proposition gives an upper bound on the rate. 
\begin{proposition}
For any $(n,k,r,z)_\mathcal{Q}$ secret sharing scheme, it follows that
\begin{align}\label{prop:cap}
k \le n - r - z,
\end{align}
and so the rate of the scheme is at most $\frac{n-r-z}{n}$.
\end{proposition}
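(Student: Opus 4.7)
The plan is to combine the reliability and secrecy conditions via a short information-theoretic chain-rule argument. I assume the secret $\bm{m}$ is uniformly distributed on $\mathcal{Q}^k$ (this is the worst case for the bound and the conventional assumption for secret sharing), so that $H(\bm{m}) = k\log Q$. If $r + z > n$ the bound is trivial, so I restrict to $r + z \le n$.

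First I fix an index set $I \subset [n]$ with $|I| = n-r$ and a subset $J \subset I$ with $|J| = z$; such sets exist precisely because $r+z \le n$. Reliability~\eqref{def:re} gives $H(\bm{m}\mid \bm{c}_I) = 0$, so
\begin{align}
H(\bm{m}) \;=\; H(\bm{m}) - H(\bm{m}\mid \bm{c}_I) \;=\; I(\bm{m};\bm{c}_I).
\end{align}
Next I split this mutual information using $J \subset I$ and the chain rule,
\begin{align}
I(\bm{m};\bm{c}_I) \;=\; I(\bm{m};\bm{c}_J) + I(\bm{m};\bm{c}_{I\setminus J}\mid \bm{c}_J).
\end{align}
Secrecy~\eqref{def:sr} applied to the $z$-set $J$ kills the first term, i.e.\ $I(\bm{m};\bm{c}_J)=0$. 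The second term is bounded in the standard way by $H(\bm{c}_{I\setminus J}) \le |I\setminus J|\log Q = (n-r-z)\log Q$, where the last inequality uses that each share lies in $\mathcal{Q}$ and hence has entropy at most $\log Q$.

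Chaining the inequalities yields $k\log Q = H(\bm{m}) \le (n-r-z)\log Q$, so $k \le n-r-z$, and the rate bound $k/n \le (n-r-z)/n$ follows. No step is a real obstacle; the only mildly delicate point is choosing $J$ inside $I$ rather than as an arbitrary $z$-subset of $[n]$, which is what lets secrecy and reliability compose through a single chain-rule expansion and produces the tight coefficient $n-r-z$.
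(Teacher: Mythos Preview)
Your proof is correct and is essentially the same argument as the paper's: fix an $(n-r)$-set $I$ and a $z$-subset $J\subset I$, use secrecy on $J$ and reliability on $I$, and bound the remaining conditional entropy of $\bm{c}_{I\setminus J}$ by $(n-r-z)\log Q$. The paper presents it via conditional entropies (with the specific choices $I=[n-r]$, $J=[z]$) while you phrase it via mutual information and the chain rule, but the steps line up one-for-one.
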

\begin{proof}
Let the message $\bm{m}$ be uniformly distributed, then
\begin{align}
k  = H(\bm{m})  &= H(\bm{m}|\bm{c}_{[z]}) \label{eq:po1} \\& \le H(\bm{m},\bm{c}_{[n-r]}|\bm{c}_{[z]})\nonumber\\
& = H(\bm{m}|\bm{c}_{[n-r]}, \bm{c}_{[z]}) + H(\bm{c}_{[n-r]}|\bm{c}_{[z]})\label{eq:po2}\\
& = H(\bm{c}_{[n-r]}|\bm{c}_{[z]}) \label{eq:po3} \\
& = H(\bm{c}_{\{z+1, ..., n-r\}}) \le n-r-z, \nonumber
\end{align}
where (\ref{eq:po1}) follows from the security requirement, (\ref{eq:po2}) follows from the chain rule, and (\ref{eq:po3}) follows from the reliability requirement.
\end{proof}

A secret sharing scheme is \emph{rate-optimal} if it achieves equality in (\ref{prop:cap}). Note that the scheme is a \emph{perfect scheme} if $z=n-r-1$ and is a \emph{ramp scheme} otherwise. 
Rate-optimal perfect secret sharing schemes are studied in the seminal work by Shamir \cite{Shamir:1979vo}, and are later generalized to ramp schemes \cite{Blakley84, Yamamoto86}. 
Note that by (\ref{prop:cap}) the rate of any perfect scheme is at most $1/n$ as $k=1$. Any scheme of a higher rate is necessarily a ramp scheme. 
\section{Lower Bound on Communication Overhead}\label{sec:lbnd}

Suppose that the $n$ shares of the secret are stored by $n$ parties or distributed storage nodes\footnote{In what follows we do not distinguish between parties and nodes.}, and a user wants to decode the secret. By Definition \ref{def:ss}, the user can connect to any $n-r$ nodes and download one share, i.e., one $Q$-ary symbol, from each node. Therefore, by communicating $n-r$ symbols, the user can decode a secret of $k \le n-r-z$ symbols. It is clear that a communication overhead of $z$ symbols occurs during \emph{decoding}. The question is, whether it is possible to reduce the communication overhead. We answer this question affirmatively in the remaining part of the paper. 

There are two key ideas for improving the communication overhead. Firstly, in many practical scenarios and particularly in distributed storage systems, often time more than $n-r$ nodes are available. In this case, it is not necessary to restrict the user to download from only $n-r$ nodes. Secondly, it is not necessary to download the complete share stored by the node. Instead, it may suffice to communicate only a part of the share or, in general, a function of the share. In other words, a node can preprocess its share before transmitting it to the user.

 Motivated by these ideas, for any $I \subset [n]$, $|I| \ge n-r$, define a class of \emph{preprocessing functions} $E_{I,i} : \mathcal{Q} \to \mathcal{S}_{I,i}$,  where $|\mathcal{S}_{I,i}| \le |\mathcal{Q}|$, that maps $c_i$ to $e_{I,i}=E_{I,i}(c_i)$. Let $\bm{e}_I = ( e_{I,i})_{ i \in I }$, and define a class of \emph{decoding functions} $D_{I} : \prod_{i \in I} \mathcal{S}_{I,i} \to \mathcal{Q}^k$, such that $D_{I} (\bm{e}_I) = \bm{m}$. For a naive example, consider any $I$ such that $|I| = n-r$. Then for $i \in I$, we can let $\mathcal{S}_{I,i} = \mathcal{Q}$, let $E_{I,i}$ be the identity function, and let $D_I$ be the naive decoding function $D^*_{I}$ described in Definition \ref{def:ss}. In the remaining paper, when $I$ is clear from the context, we will suppress it in the subscripts of $\mathcal{S}_{I,i}, E_{I,i}$, $e_{I,i}$ and $\bm{e}_I$, and denote them by $\mathcal{S}_i$, $E_i$, $e_i$ and $\bm{e}$ instead. We now formally define the notion of communication overhead in decoding. Note that all $\log$ functions in the paper are base $Q$.

\begin{definition}
For any $I$ such that $|I| \ge n-r$, define the communication overhead function to be $\emph{\text{CO}}(I) = \sum_{i \in I} \log |\mathcal{S}_{I,i}| - k$. Namely, $\emph{CO}(I)$ is the amount of extra information, measured in $Q$-ary symbols, that one needs to communicate in order to decode a secret of $k$ symbols, provided that the set of available shares is indexed by $I$.
\end{definition}
The following result provides a lower bound on the communication overhead function. It generalizes the lower bound in \cite{Wang08} for perfect schemes, i.e., schemes with $k=1$. 
\begin{theorem}\label{th:upbnd}
For any $(n,k,r,z)_{\mathcal{Q}}$ secret sharing scheme with preprocessing functions $\{E_{I,i}\}_{i \in [n], |I| \ge n-r}$ and decoding functions $\{D_I\}_{|I| \ge n-r}$, it follows that 
\begin{align}
\emph{CO}(I) \ge \frac{k z}{|I|-z}.\label{eq:upbnd}
\end{align}
\end{theorem}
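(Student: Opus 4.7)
The plan is to bound $\sum_{i \in I} \log|\mathcal{S}_{I,i}|$ from below by $kd/(d-z)$, where $d=|I|$, via an information-theoretic argument applied to each $z$-subset of $I$, followed by a double-counting symmetrization.

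Throughout I would assume $\bm{m}$ uniform on $\mathcal{Q}^k$, so that $H(\bm{m})=k$. Fix $I$ with $d=|I|\ge n-r$. For each $J\subset I$ with $|J|=z$, let $I'=I\setminus J$ and write $\bm{e}=\bm{e}_I$, $\bm{e}_{J}=(e_{I,i})_{i\in J}$, $\bm{e}_{I'}=(e_{I,i})_{i\in I'}$. The first key step is to show the \emph{per-subset} inequality
\begin{equation}\label{eq:persubset}
\sum_{i\in I'} \log|\mathcal{S}_{I,i}|\;\ge\; k.
\end{equation}
To obtain \eqref{eq:persubset}, I would first note that since every $e_{I,i}$ is a deterministic function of $c_i$, the vector $\bm{e}_J$ is a function of $\bm{c}_J$; combining with the secrecy requirement \eqref{def:sr} gives $I(\bm{m};\bm{e}_J)=0$. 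Next, the existence of a decoder $D_I$ with $D_I(\bm{e})=\bm{m}$ yields $H(\bm{m}\mid\bm{e})=0$, so $I(\bm{m};\bm{e})=H(\bm{m})=k$. Chain-ruling the mutual information,
\begin{equation*}
k \;=\; I(\bm{m};\bm{e}) \;=\; I(\bm{m};\bm{e}_J) + I(\bm{m};\bm{e}_{I'}\mid\bm{e}_J) \;=\; I(\bm{m};\bm{e}_{I'}\mid\bm{e}_J).
\end{equation*}
Since conditioning reduces entropy and entropy is subadditive,
\begin{equation*}
k \;\le\; H(\bm{e}_{I'}\mid\bm{e}_J) \;\le\; H(\bm{e}_{I'}) \;\le\; \sum_{i\in I'} H(e_{I,i}) \;\le\; \sum_{i\in I'}\log|\mathcal{S}_{I,i}|,
\end{equation*}
which is precisely \eqref{eq:persubset}.

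The second step is a symmetrization by summing \eqref{eq:persubset} over all $\binom{d}{z}$ choices of $J\subset I$ with $|J|=z$. Each index $i\in I$ appears in the left-hand side exactly $\binom{d-1}{z}$ times (the number of $z$-subsets of $I$ avoiding $i$), so
\begin{equation*}
\binom{d-1}{z}\sum_{i\in I}\log|\mathcal{S}_{I,i}|\;\ge\;\binom{d}{z}\,k,
\end{equation*}
and dividing by $\binom{d-1}{z}$ gives $\sum_{i\in I}\log|\mathcal{S}_{I,i}|\ge kd/(d-z)$. Subtracting $k$ yields the claimed bound $\text{CO}(I)\ge kz/(d-z)$.

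I do not expect a serious obstacle; the only subtlety is to make sure the secrecy hypothesis, which is stated on raw shares $\bm{c}_J$, transfers to the preprocessed symbols $\bm{e}_J$, and this is immediate from the data-processing argument noted above (each $e_{I,i}$ is a function of $c_i$ alone, so $J\subset[n]$ with $|J|=z$ in Definition~\ref{def:ss} covers every $J\subset I$ we need). Everything else is chain rule, subadditivity, and elementary counting.
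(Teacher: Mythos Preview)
Your proof is correct. The per-subset inequality \eqref{eq:persubset} is established by essentially the same information-theoretic chain as in the paper (secrecy on the $z$-subset, reliability on the whole of $I$, chain rule), and your double-counting identity $\binom{d}{z}/\binom{d-1}{z}=d/(d-z)$ finishes cleanly.

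Your route differs from the paper's in the second step. The paper does not average over all $z$-subsets $J\subset I$; instead it orders the alphabets so that $|\mathcal{S}_{i_1}|\le\cdots\le|\mathcal{S}_{i_{|I|}}|$, applies \eqref{eq:persubset} once with $J=\{i_{|I|-z+1},\dots,i_{|I|}\}$, and then argues that the largest of the first $|I|-z$ terms---hence each of the remaining $z$ terms---is at least $k/(|I|-z)$. Your symmetrization is more uniform and avoids the ordering trick; the paper's argument, on the other hand, yields the slightly sharper auxiliary fact that the $z$ largest individual $\log|\mathcal{S}_{I,i}|$ are each at least $k/(|I|-z)$, though this extra information is not used elsewhere. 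Both arguments are equally short and valid.
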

\begin{proof}
Consider arbitrary $I = \{i_1, ..., i_{|I|}\}$ such that $|I| \ge n-r$. 
Assume without loss of generality that $|\mathcal{S}_{i_1}| \le |\mathcal{S}_{i_2}| \le ... \le |\mathcal{S}_{i_{|I|}}|$. Recall that $\bm{e}_I = (e_{i_1}, ..., e_{i_{|I|}})$ is the output of the preprocessing functions.
\begin{align}
H(e_{i_1}, ..., e_{i_{|I|-z}}) & \stackrel{(a)}{\ge}  H(e_{i_1}, ..., e_{i_{|I|-z}} | e_{i_{|I|-z+1}}, ..., e_{i_{|I|}} ) \nonumber\\
& \stackrel{(b)}{=} H(e_{i_1}, ..., e_{i_{|I|-z}} | e_{i_{|I|-z+1}}, ..., e_{i_{|I|}} ) + H(\bm{m} | e_{i_1}, ...,  e_{i_{|I|}} )\nonumber\\
& \stackrel{(c)}{=} H( \bm{m} , e_{i_1}, ..., e_{i_{|I|-z}}| e_{i_{|I|-z+1}}, ..., e_{i_{|I|}} )\nonumber\\
& \stackrel{}{\ge} H(\bm{m}|e_{i_{|I|-z+1}}, ..., e_{i_{|I|}})\nonumber\\
& \stackrel{(d)}{=} H(\bm{m}) = k, \label{eq:entk}
\end{align}
where (a) follows from conditioning reduces entropy, (b) follows from (\ref{def:re}), (c) follows form the chain rule, and (d) follows from (\ref{def:sr}). Therefore it follows from (\ref{eq:entk}) that 
\begin{align*}
\prod_{j=1}^{|I|-z} |\mathcal{S}_{i_j}| \ge Q^{H(e_{i_1}, ..., e_{i_{|I|-z}})} \ge Q^k,
\end{align*}
and so 
\begin{align}\label{eq:o1}
\sum_{j=1}^{|I|-z} \log|\mathcal{S}_{i_j}| \ge k.
\end{align}
It then follows from $|\mathcal{S}_{i_1}| \le ... \le |\mathcal{S}_{i_{|I|}}|$ that,
\begin{align*}
\log|\mathcal{S}_{i_{|I|-z}}| \ge  \frac{k}{|I|-z},
\end{align*} 
and that,
\begin{align}
\log|\mathcal{S}_{i_{|I|-z+j}}| \ge \log|\mathcal{S}_{i_{|I|-z}}|  \ge  \frac{k}{|I|-z}, \ \ \ \ \  j=1,...,z \label{eq:o2}.
\end{align}
Combining (\ref{eq:o1}) and (\ref{eq:o2}) we have,
\begin{align*}
\text{CO}(I) = \sum_{j=1}^{|I|} \log|\mathcal{S}_{i_j}| -k \ge \frac{kz}{|I|-z}.
\end{align*}
\end{proof}

The \emph{decoding bandwidth} is defined to be the total amount of $Q$-ary symbols the user downloads from the nodes, which equals $CO(I)+k$. Theorem \ref{th:upbnd} suggests that the communication overhead and the decoding bandwidth decrease as the number of available nodes increases.

For rate-optimal schemes, Theorem \ref{th:upbnd} implies that if $|I| = n-r$, then the communication overhead is at least $z$, i.e., the user needs to download the complete share from each available node. The naive decoding function $D^*_I$ in Definition \ref{def:ss} trivially achieves this bound. The more interesting scenario is the regime that $|I| > n-r$. In this case, if (\ref{eq:upbnd}) is tight, then one can achieve a non-trivial improvement on decoding bandwidth compared to the naive decoder $D^*_I$. When $k=1$ (i.e., for perfect schemes) and fixing any $d > n-r$, \cite{Wang08} constructs a rate-optimal  scheme that achieves the lower bound (\ref{eq:upbnd}) for any $I$ such that $|I| = d$. However, several interesting and important questions remain open. Firstly, is the lower bound uniformly tight, or in other words, is it possible to construct a scheme that achieves (\ref{eq:upbnd}) universally for any $I$ such that $|I|\ge n-r$ (note that the scheme in \cite{Wang08} does not achieve the lower bound when $|I| \ne d$)? Secondly, is the bound tight when $k>1$ (i.e., for ramp schemes) and how to design such schemes? We answer these questions in the following section.

\section{Construction from Shamir's scheme}\label{sec:shar}
In this section we construct a rate-optimal scheme that achieves the optimal decoding bandwidth universally for all possible $I$, i.e., all sets of available nodes. This implies that the lower bound in Theorem \ref{th:upbnd} is uniformly tight. The scheme is based on Shamir's scheme  and preserves its simplicity and efficiency. The scheme is flexible in the parameters $n$, $k$, $r$, $z$ and hence is flexible in rate. 

We first refer the readers to Figure \ref{fig:ours} for an example of the scheme, and use it to describe the general idea of the construction. To construct a scheme that achieves the optimal decoding bandwidth when $d$ nodes are available, for all $d \in \mathcal{D}$, we design a set of polynomials of different degrees. Particularly,  for all $d \in \mathcal{D}$, we design a number of polynomials of degree exactly $d-1$, and store one evaluation of each polynomial at each node. For each polynomial, exactly $z$ of its coefficients are independent keys in order to meet the secrecy requirement. The remaining coefficients encode ``information'': for the highest-degree (e.g., degree $d_{\max}-1$, where $d_{\max}=\max_{d \in \mathcal{D}} d$) polynomials, their coefficients encode the entire message; for other polynomials, say $g(x)$, the information encoded in the coefficients of $g(x)$ is the high-degree coefficients of the polynomials of degree higher than $g(x)$. Such an arrangement of the coefficients enables decoding in a successive manner. Consider decoding when $d$ nodes are available, implying that $d$ evaluations of each polynomial are known and hence all polynomials of degree $d-1$ can be interpolated. Then, roughly speaking, the arrangement ensures that the high-degree coefficients of some higher-degree polynomials are known, so that the remaining unknown parts of these polynomials can be interpolated. This in turn allows to decode coefficients for additional high-degree polynomials and thus to interpolate them. The chain continues until all polynomials of degree higher than $d-1$ are interpolated, implying that the message is decoded. Note that no polynomials of degree smaller than $d-1$ are interpolated, and therefore the keys associated with them are not decoded. This leads to the saving in decoding bandwidth and in fact this amount is the best one can expect to save, so that the scheme achieves the optimal bandwidth. Below we describe the scheme formally.

\subsection{Encoding}\label{sec:sharenc}
Consider arbitrary parameters $n,r,z$, $\mathcal{D}$ and let $k=n-r-z$. We assume that $n-r \in \mathcal{D}$ since it is implied by the reliability requirement. Choose any prime power $q>n$, the scheme is $\mathbb{F}_q$-linear over share alphabet $\mathcal{Q}=\mathbb{F}_q^b$, where $b$ is the number of ($\mathbb{F}_q$) symbols stored by each node. The message $\bm{m}$ is a vector over $\mathbb{F}_q$ of length $|\bm{m}|=kb$.  The choice of $b$ is determined by $\mathcal{D}$ in the following way. Let $|\bm{m}|$ be the least common multiple of $\{d-z: d \in \mathcal{D}\}$, i.e., the smallest positive integer that is divisible by all elements of the set. Note that indeed $|\bm{m}|$ is a multiple of $k=n-r-z$, and we let $b = \frac{|\bm{m}|}{k}$. This is the smallest choice of $|\bm{m}|$ (and thus $b$) that ensures when $d \in \mathcal{D}$ nodes are available, that the optimal bandwidth, measured by the number of $\mathbb{F}_q$ symbols, is an integer.


We now construct $b$ polynomials over $\mathbb{F}_q$, evaluate each of them at $n$ non-zero points, and let every node stores an evaluation of each polynomial. Let $\mathcal{D} = \{d_1, d_2, ..., d_{|\mathcal{D}|} \}$, such that $n \ge d_1 > d_2 > ... > d_{|\mathcal{D}|}=n-r$. For $i \in |\mathcal{D}|$, let
\begin{align}
p_i = \left\{ \begin{array}{ll}
\frac{|\bm{m}|}{d_1-z} & i=1\\
\frac{|\bm{m}|}{d_i-z} - \frac{|\bm{m}|}{d_{i-1}-z} & i>1
\end{array}\right.
\end{align}
We construct $p_i$ polynomials of degree $d_i-1$. For all polynomials, their $z$ lowest-degree coefficients are independent random keys. We next define the remaining $d_i-z$ non-key coefficients. We first define them for the highest degree polynomials, and then recursively define them for the lower degree polynomials. For $i=1$, the non-key coefficients of the polynomials of degree $d_i-1$ are message symbols. Note that there are $|\bm{m}|$ message symbols and $\frac{|\bm{m}|}{d_1-z}$ polynomials of degree $d_1-1$. Each such polynomial has $d_1-z$ non-key coefficients and so there are exactly enough coefficients to encode the message symbols. For $i>1$, the non-key coefficients encode the degree $d_i$ to $d_{i-1}-1$ coefficients of all higher (than $d_{i}-1$) degree polynomials. Note that there are $\sum_{j=1}^{i-1} p_j = \frac{|\bm{m}|}{d_{i-1}-z}$ higher degree polynomials and so the total number of coefficients to encode is $(d_{i-1}-d_{i}) \frac{|\bm{m}|}{d_{i-1}-z}$. On the other hand, there are $p_i$ polynomials of degree $d_i-1$, each of them has $d_i-z$ non-key coefficients, and so the total number of non-key coefficients is $(d_i-z)\left(\frac{|\bm{m}|}{d_i-z} - \frac{|\bm{m}|}{d_{i-1}-z}\right)$. It is trivial to verify that the two numbers are equal and so there is exactly enough coefficients to encode. Note that the specific way to map the coefficients is not important and any 1-1 mapping suffices. Finally, evaluate each polynomial at $n$ non-zero points and store an evaluation of each polynomial at each node. This completes the scheme. Note that indeed the total number of polynomials is $\sum_{i=1}^{|\mathcal{D}|} p_i = \frac{|\bm{m}|}{d_{|\mathcal{D}|}-z} = \frac{|\bm{m}|}{k} =b$, implying that the scheme is rate-optimal.

\subsection{Decoding}
 For any $d_i \in \mathcal{D}$, we describe the decoding algorithm of the scheme when $d_i$ nodes are available. It achieves the optimal decoding bandwidth, and since $d_{|\mathcal{D}|}=n-r$ it implies that the scheme meets the reliability requirement. We first interpolate all polynomials of degree $d_i-1$. After that for all polynomials of degree $d_{i-1}-1$, their coefficients of degree larger than $d_i-1$ are known (as they are encoded in the coefficients of the polynomials of degree $d_i-1$) and so they can be interpolated. In general, for $j \le i$, once the polynomials of degree between $d_j-1$ and $d_i -1$ are interpolated, then for the polynomials of degree $d_{j-1}-1$, their coefficients of degree larger than $d_i-1$ are known by construction and so they can be interpolated. Therefore we can successively interpolate the polynomials of higher degree until the polynomials of degree $d_1-1$ are interpolated and so the message symbols are decoded. The total number of $\mathbb{F}_q$ symbols communicated is $d_i \sum_{j=1}^{i} p_j = d_i \frac{|\bm{m}|}{d_i-z}$. By Theorem \ref{th:upbnd}, the decoding bandwidth is at least $$ |\bm{m}| + \frac{kbz}{d_i-z}= kb + \frac{kbz}{d_i-z} = kb\left(1+\frac{z}{d_i-z}\right) = \frac{d_i |\bm{m}|}{d_i-z}$$ $\mathbb{F}_q$ symbols. Therefore the optimal bandwidth is achieved. 

\subsection{Secrecy}
We show that the scheme is secure against $z$ eavesdropping nodes. Since each polynomial individually is a Shamir's scheme, the secrecy of the scheme derives from the secrecy of Shamir's scheme. The main idea is to show that if these polynomials are combined, the resulting scheme is still secure. We first prove a simple lemma.
\begin{lemma}\label{lem:info}
Consider random variables $M_1$, $M_2$, $K_1$, $K_2$ such that $K_2$ is independent of $\{M_1,K_1\}$. For $i=1,2$ Let $F_i$ be a deterministic function of $M_i, K_i$. If $I(M_1;F_1)=0$ and $I(M_2;F_2)=0$, then $I(M_1; F_1, F_2)=0$. In addition, if $K_1$ is independent of $M_2$, then $I(M_1,M_2;F_1,F_2)=0$.
\end{lemma}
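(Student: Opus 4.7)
The proof naturally decomposes into two parts corresponding to the two claims, and both parts proceed by chain-rule decomposition of mutual information combined with careful bookkeeping of the independence structure among $M_1,M_2,K_1,K_2$.

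For the first claim $I(M_1;F_1,F_2)=0$, I plan to begin with the chain rule
\[
I(M_1;F_1,F_2)=I(M_1;F_1)+I(M_1;F_2\mid F_1),
\]
whose first summand vanishes by the secrecy hypothesis. For the second summand, the plan is to establish the stronger assertion $I\bigl((M_1,F_1);F_2\bigr)=0$, which immediately implies $I(M_1;F_2\mid F_1)=0$. The ingredients are: since $F_1$ is a deterministic function of $(M_1,K_1)$ and $K_2$ is independent of $\{M_1,K_1\}$, the key $K_2$ is independent of the pair $(M_1,F_1)$; combined with the secrecy hypothesis $I(M_2;F_2)=0$, which says the marginal distribution of $F_2$ does not vary with $M_2$, one argues that the conditional distribution of $F_2$ given any realization of $(M_1,F_1)$ coincides with its marginal.

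For the second claim $I(M_1,M_2;F_1,F_2)=0$, the added hypothesis $K_1\perp M_2$, together with $K_1\perp K_2$ (implied by $K_2\perp\{M_1,K_1\}$), puts schemes $1$ and $2$ on symmetric footing. I would invoke the first claim with the indices $1$ and $2$ exchanged to obtain $I(M_2;F_1,F_2)=0$, and then decompose
\[
I(M_1,M_2;F_1,F_2)=I(M_2;F_1,F_2)+I(M_1;F_1,F_2\mid M_2),
\]
showing that the conditional term vanishes by running the same style of independence argument under additional conditioning on $M_2$.

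The main obstacle is the core step $I\bigl((M_1,F_1);F_2\bigr)=0$. The hypotheses provide only independence between one scheme's key and the other scheme's data $(M_1,K_1)$ together with individual secrecy of each $F_i$ with respect to its own $M_i$; they do not directly guarantee joint independence of the two schemes' data. The delicate point is to use the secrecy condition $I(M_2;F_2)=0$ to neutralize any residual dependence of $F_2$ on $M_2$, so that with respect to $(M_1,F_1)$ the variable $F_2$ behaves as if driven purely by the independent key $K_2$; everything else is then routine algebra of mutual information.
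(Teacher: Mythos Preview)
Your plan for the first claim is essentially the paper's. Both arguments rest on showing $F_2 \perp (M_1,F_1)$: the paper does this by noting that since $F_2$ is a function of $(M_2,K_2)$ and $K_2\perp\{M_1,K_1,F_1\}$, one has the Markov chain $\{M_1,K_1,F_1\}\to M_2\to F_2$, so $I(M_1,K_1,F_1,M_2;F_2)=I(M_2;F_2)=0$ and hence $F_2\perp\{M_1,F_1\}$. This is exactly the content of your ``delicate point,'' made precise. (The paper then expands $I(M_1;F_1,F_2)=I(M_1;F_2)+I(M_1;F_1\mid F_2)$ rather than your ordering, but that is immaterial.)

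For the second claim, your symmetry step has a genuine gap. Swapping indices in the first claim requires the hypothesis $K_1\perp\{M_2,K_2\}$, but the stated assumptions only give $K_1\perp M_2$ and (via $K_2\perp\{M_1,K_1\}$) $K_1\perp K_2$; pairwise independence does not imply joint independence. Concretely, take $K_1,M_2$ independent fair bits, set $K_2=K_1\oplus M_2$, and let $M_1$ be an independent bit: then $K_2\perp\{M_1,K_1\}$ and $K_1\perp M_2$ both hold, yet $K_1$ is a deterministic function of $(M_2,K_2)$, so the swapped hypothesis fails. The paper avoids this by decomposing over the $F$'s rather than the $M$'s: it writes
\[
I(M_1,M_2;F_1,F_2)=I(M_1,M_2;F_1)+I(M_1,M_2;F_2\mid F_1),
\]
kills the first term via the Markov chain $M_2\to M_1\to F_1$ (obtained from $K_1\perp M_2$ and the fact that $F_1$ is a function of $(M_1,K_1)$, yielding $I(M_1,M_2;F_1)=I(M_1;F_1)=0$), and kills the second term using $\{M_1,F_1,M_2\}\perp F_2$, which already follows from the Markov chain $\{M_1,F_1\}\to M_2\to F_2$ established in the first part. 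Your conditional term $I(M_1;F_1,F_2\mid M_2)$ is also left vague; conditioning on $M_2$ can destroy the independence $K_2\perp(M_1,K_1)$ you rely on, so ``running the same argument'' is not automatic.
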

\begin{proof} We start with the first statement. Since $F_2$ is a function of $K_2, M_2$ but $K_2$ is independent of $\{M_1, K_1, F_1\}$, it follows that 
 $F_2$ is independent of $\{M_1, K_1, F_1\}$ conditioning on $M_2$, implying the Markov chain $\{M_1, K_1, F_1\} \to M_2 \to F_2$. Therefore, $I(M_1, K_1, F_1, M_2; F_2) = I(M_2; F_2) = 0$, i.e., $F_2$ and $\{M_1, K_1, F_1, M_2\}$ are independent. Hence $I(M_1; F_1, F_2) = I(M_1; F_2) + I(M_1; F_1|F_2) \stackrel{(a)}{=} I(M_1; F_1|F_2) \stackrel{(b)}{=} I(M_1; F_1) =0 $, where $(a)$ and $(b)$ follows from the fact that $F_2$ is independent from $\{M_1, F_1\}$.
 
 To prove the second statement, note that since $K_1$ is independent of $M_2$ and that $F_1$ is a function of $M_1, K_1$, we have the Markov Chain $M_2 \to M_1 \to F_1$, by which it follows that $I(M_1, M_2; F_1) = I(M_1; F_1) =0$.  Similarly because $K_2$ is independent of $\{M_1,K_1,F_1\}$ and that $F_2$ is a function of $M_2, K_2$, we have the Markov Chain $\{M_1,F_1\} \to M_2 \to F_2$. By this chain it follows that $I(M_1, F_1, M_2;F_2) = I(M_2;F_2)=0$, i.e., $\{M_1,F_1,M_2\}$ is independent of $F_2$. Therefore $I(M_1,M_2; F_2| F_1) = 0$ and so $I(M_1,M_2;F_1,F_2) = I(M_1,M_2;F_1) + I(M_1,M_2; F_2| F_1) =0$.
\end{proof}
Suppose that the adversary compromises $z$ nodes and obtains $z$ evaluations of each polynomial. Consider the $i$-th polynomial in the order that we define them, let $\bm{f}_i$ denote the adversary's observation of this polynomial, let $\bm{k}_i$ denote the key coefficients of this polynomial and let $\bm{m}_i$ denote the non-key coefficients.
The secrecy of Shamir's scheme implies that
\begin{align}
I(\bm{m}_i; \bm{f}_i)=0, \ \ \  i =1, ..., b.
\end{align}

Consider the first $p_1$ polynomials which are polynomials of the highest degree $d_1-1$. By construction, $\bm{m}_1,...,\bm{m}_{p_1}$ exactly encode the message $\bm{m}$.  We invoke  Lemma \ref{lem:info} by regarding $\bm{m}_1, \bm{k}_1$, $\bm{f}_1, \bm{m}_2, \bm{k}_2$ and $\bm{f}_2$ as $M_1, K_1, F_1, M_2, K_2$ and $F_2$.  By the second statement of the lemma it follows that $I(\bm{m}_1, \bm{m}_2; \bm{f}_1, \bm{f}_2)=0$. Inductively, for $1<i< p_1$, suppose that $I(\bm{m}_1, ..., \bm{m}_i; \bm{f}_1, ..., \bm{f}_i)=0$. We regard $\{\bm{m}_1,...,\bm{m}_i\}$ as $M_1$, $\{ \bm{k}_1,...,\bm{k}_i \}$ as $K_1$, $\{ \bm{f}_1,...,\bm{f}_i \}$ as $F_1$, and regard $\bm{m}_{i+1}, \bm{k}_{i+1} ,\bm{f}_{i+1}$ as $M_2, K_2, F_2$. It follows from Lemma~\ref{lem:info} that $I(\bm{m}_1, ..., \bm{m}_{i+1}; \bm{f}_1, ..., \bm{f}_{i+1})=0$. By induction we have $I(\bm{m}_1, ..., \bm{m}_{p_1}; \bm{f}_1, ..., \bm{f}_{p_1})=0$.

We then regard $ \{\bm{m}_1, ..., \bm{m}_{p_1}\} \triangleq \bm{m}$ as $M_1$, $\{ \bm{k}_1,...,\bm{k}_{p_1} \}$ as $K_1$, $\{ \bm{f}_1,...,\bm{f}_{p_1} \}$ as $F_1$, and regard $\bm{m}_{p_1+1}$, $\bm{k}_{p_1+1}$, $\bm{f}_{p_1+1}$ as $M_2, K_2, F_2$. Then it follows from the first statement of Lemma \ref{lem:info} that $I(\bm{m}; \bm{f}_1, ..., \bm{f}_{p_1+1})=0$. Inductively, for $p_1 < i < b$, suppose that $I(\bm{m}; \bm{f}_1, ..., \bm{f}_{i})=0$. We regard $\bm{m}$ as $M_1$, $\{\bm{k}_1, ..., \bm{k}_{i}\}$ as $K_1$, $\{\bm{f}_1, ..., \bm{f}_{i}\}$ as $F_1$, and regard $\bm{m}_{i+1}, \bm{k}_{i+1}, \bm{f}_{i+1}$ as $M_2, K_2, F_2$.
By Lemma \ref{lem:info} we have $I(\bm{m}; \bm{f}_1, ..., \bm{f}_{i+1})=0$. By induction it follows that $I(\bm{m}; \bm{f}_1, ..., \bm{f}_b)=0$, implying that the adversary learns no information about the message $\bm{m}$. This completes the proof and we have the following theorem.
\begin{theorem} Let $\mathcal{D} \subset \{n-r,n-r+1,...,n\}$,
the encoding scheme constructed in Section \ref{sec:sharenc} is a rate-optimal $(n,k,r,z)$ secret sharing scheme. The scheme achieves the optimal decoding bandwidth when $d$ nodes participate in decoding, universally for all $d \in \mathcal{D}$. \end{theorem}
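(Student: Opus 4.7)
\medskip

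\noindent\textbf{Proof plan.} The theorem bundles four claims, and my plan is to verify each in turn, leaning on the construction-level computations already laid out in Section~\ref{sec:shar}.

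First, rate-optimality is a counting check. By construction the total number of polynomials used is $\sum_{i=1}^{|\mathcal{D}|} p_i$, which telescopes to $|\bm{m}|/(d_{|\mathcal{D}|}-z) = |\bm{m}|/k = b$, so each node stores exactly $b$ symbols over $\mathbb{F}_q$ while the message has length $kb$. Hence $k = n-r-z$, matching the bound in~\eqref{prop:cap}.

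Second, reliability and optimal decoding bandwidth are handled together. Since $n-r \in \mathcal{D}$, it suffices to show that for every $d_i \in \mathcal{D}$, a user connected to any $d_i$ nodes can recover $\bm{m}$, and does so with the bandwidth predicted by Theorem~\ref{th:upbnd}. I would spell out the successive-interpolation argument already sketched: with $d_i$ evaluations per polynomial we can interpolate every polynomial of degree $d_i-1$; this reveals, by the coefficient-embedding rule, enough high-degree coefficients of the degree-$(d_{i-1}-1)$ polynomials that their remaining unknowns form a polynomial of degree at most $d_i - 1$, which can again be interpolated from $d_i$ evaluations; iterating up the chain of degrees $d_{i-1}, d_{i-2}, \ldots, d_1$ recovers every polynomial of degree larger than $d_i-1$, and in particular all message symbols. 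For the bandwidth, each available node contributes its evaluations of the $\sum_{j=1}^{i} p_j = |\bm{m}|/(d_i-z)$ polynomials of degree at most $d_i-1$ only, giving a total of $d_i|\bm{m}|/(d_i-z)$ $\mathbb{F}_q$-symbols. Specializing Theorem~\ref{th:upbnd} to $|I|=d_i$ and multiplying through by $b$ (since the bound is stated per unit of $k$ and here each node's share is a $\mathbb{F}_q^b$-symbol) yields exactly the same number, so the bound is met with equality.

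Third, and this is the main obstacle, I would prove secrecy by a double induction built on Lemma~\ref{lem:info}. Fix any set of $z$ compromised nodes and, for the $i$-th polynomial in the construction order, write $\bm{m}_i, \bm{k}_i, \bm{f}_i$ for its non-key coefficients, its key coefficients, and the adversary's $z$ evaluations, respectively. Shamir's scheme ensures $I(\bm{m}_i;\bm{f}_i)=0$ for each individual $i$, and the independence of the $\bm{k}_i$ across $i$ provides the hypotheses of Lemma~\ref{lem:info}. The inner induction combines the first $p_1$ polynomials, whose non-key coefficients jointly are $\bm{m}$: using the second statement of Lemma~\ref{lem:info} with $(M_1,K_1,F_1)=(\bm{m}_1,\ldots,\bm{m}_i;\bm{k}_1,\ldots,\bm{k}_i;\bm{f}_1,\ldots,\bm{f}_i)$ and $(M_2,K_2,F_2)=(\bm{m}_{i+1},\bm{k}_{i+1},\bm{f}_{i+1})$ extends the joint independence from step $i$ to step $i+1$, yielding $I(\bm{m};\bm{f}_1,\ldots,\bm{f}_{p_1})=0$. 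The outer induction then folds in the lower-degree polynomials, whose $\bm{m}_i$ are functions of the $\bm{m}_j,\bm{k}_j$ with $j<i$ rather than fresh message bits; here I invoke the first statement of Lemma~\ref{lem:info} with $M_1=\bm{m}$ and $M_2=\bm{m}_{i+1}$, which only requires that the new key $\bm{k}_{i+1}$ be independent of everything seen so far. Concluding $I(\bm{m};\bm{f}_1,\ldots,\bm{f}_b)=0$ establishes~\eqref{def:sr} for the chosen set of $z$ nodes, and since the set was arbitrary the scheme is secure.

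The delicate point I expect to need care is the outer induction step: once the lower-degree $\bm{m}_i$ stop being part of $\bm{m}$ and instead are deterministic functions of the high-degree coefficients (hence of $\bm{m}$ and earlier keys), I must be sure that the only randomness being injected at step $i+1$ is the fresh key $\bm{k}_{i+1}$, so that the independence hypothesis on $K_2$ in Lemma~\ref{lem:info} is genuinely satisfied. This is why the first statement of the lemma, rather than the second, is the right tool in the outer loop.
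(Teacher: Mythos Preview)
Your proposal is correct and follows the paper's proof essentially line for line: the rate-optimality count, the successive-interpolation decoding argument with the bandwidth tally $d_i|\bm{m}|/(d_i-z)$, and the two-phase induction on Lemma~\ref{lem:info} (second statement for the first $p_1$ polynomials to assemble $\bm{m}$, then the first statement to absorb the remaining $b-p_1$ polynomials) are exactly what the paper does. One small slip to fix: when you write that each node contributes evaluations of the $\sum_{j=1}^{i} p_j$ polynomials ``of degree at most $d_i-1$,'' you mean degree \emph{at least} $d_i-1$ (since $d_1 > \cdots > d_i$, it is the high-degree polynomials that are downloaded and interpolated; the lower-degree ones are precisely the ones you skip to save bandwidth).
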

\subsection{Discussion} 
We remark on some other important advantages and properties of our construction. Firstly, the scheme also achieves the \emph{optimal number of symbol-reads from disks} in decoding. To see this, notice that the lower bound (\ref{eq:upbnd}) on communication overhead is also a lower bound on the number of $Q$-ary symbols that need to be read from disks during decoding.  The number of symbol-reads in the proposed scheme equals to the amount of communication. Therefore our scheme achieves the lower bound and hence is optimal.
 Secondly, compared to most existing schemes which decode from the minimum number of $n-r-z$ nodes, our scheme allows all available nodes (or more flexibly, any $d \in \mathcal{D}$ nodes) to participate in decoding and hence can help balance the load at the disks and achieves a higher degree of parallelization. Thirdly, the encoding and decoding of the scheme are similar to that of Shamir's scheme and therefore are efficient and practical. Particularly, the scheme works over the same field as Shamir's scheme.  Fourthly, the preprocessing functions only rely on $d=|I|$ instead of $I$, further simplifying implementation. Finally, the construction is flexible in the parameters, i.e., it works for arbitrary values of $n,r$ and $z$ and $\mathcal{D}$.

An important idea in our scheme is to construct polynomials of different degrees in order to facilitate decoding when different number of nodes are available. Similar ideas also appear in the schemes in \cite{Wang08,Zhang2012}. The main technique that enables the improvement of our schemes is a more careful and flexible design of the numbers and degrees of the polynomials, as well as the arrangement of their coefficients.

Our scheme maps the high-degree coefficients of the higher degree polynomials into the coefficients of the lower degree polynomials, whereas the specific mapping is not important and any 1-1 mapping suffices. In practice, the flexibility in choosing the specific mapping is helpful. Particularly, it is possible to improve the (computational) encoding complexity of the scheme substantially by choosing a mapping that maintains the order of the coefficients. Refer to Figure \ref{fig:ours} for an example. We need to compute $m_4x + m_5x^2 + m_6x^3$ in evaluating $g(x)$, and we can reuse this computation in evaluating $f(x)$, because $f(x)$ contains the same run of consecutive coefficients $m_4x^4 + m_5x^5 + m_6x^6$. This for example will save 2 multiplications and 2 additions. 

We also note that for all polynomials in our scheme, the $z$ lowest degree coefficients are independent keys. However, in general this is not necessary: in any polynomial, we can choose any consecutive $z$ coefficients to be independent keys, and use the remaining coefficients to encode information (i.e., message symbols and coefficients of higher degree polynomials). The resulting scheme is a still valid and achieves the optimal decoding bandwidth universally. Under this observation, we note that our scheme generalizes  the scheme in a recent independent work \cite{Bitar15}. Particularly, our scheme is equivalent to the scheme in \cite{Bitar15} if we require a specific coefficient mapping and let the $z$ highest (instead of lowest) coefficients of all polynomial to be keys\footnote{The scheme in \cite{Bitar15} also lets a node evaluate all polynomials at the same point, whereas this is not necessary in our framework.}.

As noted above, the flexibility of our scheme in choosing the coefficient mapping is beneficial in practice. Furthermore, we remark that choosing the lowest degree coefficients to be keys has several practical advantages: decoding the scheme involves sequentially interpolating the polynomials through multiple iterations, which can lead to undesirable delay especially when $|\mathcal{D}|$ is large. To mitigate this issue, we wish to decode the message symbols ``on the fly'' in each iteration. Specifically, if $d$ nodes are available, then each time a polynomial is interpolated, exactly $d$ new message and/or key symbols are decoded. Since the number of symbols decoded in each interpolation, the total number of message symbols and the total number of key symbols to be decoded are all fixed, there is a trade-off between the decoding order of the key and message symbols. 
The \emph{optimal trade-off} is to delay decoding the keys as much as possible, so that the maximum number of message symbols are decoded on the fly. Specifically, notice that by the time that a number of $i$ polynomials are interpolated, at least $zi$ key symbols are decoded since each polynomial introduces $z$ independent key coefficients for secrecy. The optimal trade-off is achieved if indeed exactly $zi$ keys are decoded, implying that $(d-z)i$ message symbols are decoded. Our scheme achieves this optimal trade-off by choosing the $z$ lowest degree coefficients to be keys. This is because by construction, only coefficients of degree higher than $d_{|\mathcal{D}|} =n-r>z$ will be mapped to the coefficients of the lower degree polynomials. Hence the key coefficients are never mapped, implying that the remaining information coefficients encode only message symbols. Therefore, at any moment during the decoding process, our scheme always decodes the maximum number of message symbols. In other words the decoding delay, measured in the number of iterations, averaged over all message symbols, is minimized. Moreover, the fact that each polynomial interpolation decodes a fixed number of $d-z$ new message symbols is helpful for implementation. On the other hand, note that choosing the $z$ highest degree coefficients to be keys implies that the keys will be mapped to the coefficients of lower degree polynomials. Hence the keys will be decoded earlier than necessary (since lower degree polynomials are interpolated earlier) and it is not possible to achieve the optimal trade-off. Consider the example in Figure \ref{fig:ours}, if we switch the keys to high degree coefficients, then the polynomials are  $f(x)=m_1+m_2x+m_3 x^2 + m_4 x^3 + m_5 x^4  + m_6 x^5 + k_1 x^6$, $g(x)=m_5 + m_6 x + k_1 x^2 + k_2 x^3$ and $h(x) = m_4 + k_2 x + k_3 x^2$. In the case that $d=4$ nodes are available, only 2 message symbols $m_5,m_6$ are decoded in the first iteration and the remaining 4 message symbols are decoded in the second (last) iteration. In comparison, the original scheme performs better by decoding 3 message symbols in each iteration. Finally, we remark that decoding the maximum number of message symbols on the fly is also beneficial in terms of partial decoding, i.e., decoding a subset of message symbols. In this case decoding can finish early if all symbols of interest are decoded, and our scheme will maximize the chance of finishing early.



\section{Construction from Reed-Solomon Codes}\label{sec:rs}
In this section we present another rate-optimal secret sharing scheme that achieves the optimal decoding bandwidth when all $n$ nodes are available. The scheme is flexible in the parameters and hence is flexible in rate.  The scheme is directly related to Reed-Solomon codes. Particularly, the encoding matrix of the scheme is a generator matrix of Reed-Solomon codes, and so the scheme can be decoded as  Reed-Solomon codes. This is an advantage over the scheme in the previous section, which requires recursive decoding. The scheme also provides a stronger level of reliability in the sense that it allows decoding even if more than $r$ shares are partially erased. On the other hand, unlike the previous scheme, this scheme does not achieve the optimal decoding bandwidth universally, but rather only for $d=n-r$ and $d=n$. However, we remark that the case that the $n$ nodes are available is particularly important because it correspond to the  best case in terms of decoding bandwidth and is arguably the most relevant case for the application of distributed storage, where the storage nodes are usually highly available.

\subsection{Encoding} \label{sec:enc}
Fix $k=n-r-z$, let $q> n(k+r)$ be a prime power, and let the share alphabet be $\mathcal{Q} = \mathbb{F}_q^{k+r}$.  
Note that each share is a length $k+r$ vector over $\mathbb{F}_q$. For $j=1,...,n$, denote the $j$-th share by $c_j = (c_{1,j}, ..., c_{k+r,j})$, where $c_{i,j} \in \mathbb{F}_q$. The secret message $\bm{m}$ is $k$ symbols over $\mathcal{Q}$ and therefore can be regarded as a length-$k(k+r)$ vector over $\mathbb{F}_q$, denoted by $(m_1,...,m_{k(k+r)})$. The encoder generates keys $\bm{k}=(k_1, ..., k_{kz}) \in \mathbb{F}_q^{kz}$ and $\bm{k}'=(k'_1,...,k'_{rz}) \in \mathbb{F}_q^{rz}$ independently and uniformly at random. The encoding scheme is linear over $\mathbb{F}_q$, and is described by an encoding matrix $G$ over $\mathbb{F}_q$:
\begin{align}\label{eq:enc}
 (c_{1,1}, ..., c_{1,n}, ..., c_{k+r,1} , ..., c_{k+r,n})
& = (m_1,...,m_{k(k+r)}, k_1,..., k_{kz},k'_1,...,k'_{rz}) G.
\end{align}

Note that $G$ has $k(k+r)+kz+rz=nk+rz$ rows and has $n(k+r)$ columns. 
In the following we discuss the construction of $G$ based on a Vandermonde matrix. 
We start with some notation. Let $\alpha_1, ..., \alpha_{n(k+r)}$ be distinct non-zero elements of $\mathbb{F}_q$, and let $v_{ij}=\alpha_j^{i-1}$, $i=1,...,nk+rz$, $j=1,...,n(k+r)$, then $V = (v_{ij})$ is a Vandermonde matrix of the same size as $G$. Suppose $\bm{f}=(f_0, ..., f_i)$ is an arbitrary vector with entries in $\mathbb{F}_q$, we denote by $\bm{f}[x]$ the polynomial $f_0 + f_1 x + ... + f_i x^i$ over $\mathbb{F}_q$ with indeterminate $x$. We construct a set of polynomials as follows:
\begin{align}
\bm{f}_i[x] & = x^{i-1} \hspace{30mm} i=1,...,kn,\label{eq:ele}\\
\bm{f}_{kn+i}[x] & = x^{i-1}\prod_{j=1}^{kn}(x-\alpha_j)\hspace{10mm}i=1,...,rz. \label{eq:flower}
\end{align}
Let $\bm{f}_i, i=1,...,kn+rz$ be the length-($kn+rz$) vectors over $\mathbb{F}_q$ corresponding to the polynomials. Stack the $\bm{f}_i$'s to obtain a sqaure matrix of size $(kn+rz)$:
\begin{align*}
T = \left(
\begin{array}{c}
\bm{f}_1\\
\vdots\\
\bm{f}_{kn+rz}
\end{array}
 \right)
\end{align*}
Finally, we complete the construction by setting
\begin{align*}
G=TV.
\end{align*}

\begin{example}\label{ex:1}
Consider the setting that $n=3, r=1, z=1$ and $k=n-r-z=1$. Let $q=7$ and $\mathcal{Q} = \mathbb{F}_q^2$. Then $\bm{m} = (m_1, m_2)$, $\bm{k} = (k_1)$ and $\bm{k}' = (k'_1)$. Construct a Vandermonde matrix over $\mathbb{F}_q$ as
\begin{align}
V = \left( 
\begin{array}{cccccc}
1 & 1 & 1 & 1 & 1 & 1\\
1 & 2 & 3 & 4 & 5 & 6\\
1 & 4 & 2 & 2 & 4 & 1\\
1 & 1 & 6 & 1 & 6 & 6
\end{array}
 \right).
\end{align}
Construct polynomials $\bm{f}_1[x] = 1$, $\bm{f}_2 [x] = x$, $\bm{f}_3 [x] =x^2$ and 
\begin{align*}
\bm{f}_4 [x] = (x-1)(x-2)(x-3) = 1 + 4x + x^2 + x^3.
\end{align*}
Therefore,
\begin{align*}
T = \left(
\begin{array}{c}
\bm{f}_1 \\ \bm{f}_2 \\ \bm{f}_3 \\ \bm{f}_4
\end{array}
\right) = 
\left(
\begin{array}{cccc}
1 & 0 & 0 & 0\\
0 & 1 & 0 & 0\\
0 & 0 & 1 & 0\\
1 & 4 & 1 & 1
\end{array}
\right),
\end{align*}
and the encoding matrix is given by
\begin{align*}
G = TV = \left(
\begin{array}{cccccc}
1 & 1 & 1 & 1 & 1 & 1\\
1 & 2 & 3 & 4 & 5 & 6\\
1 & 4 & 2 & 2 & 4 & 1\\
0 & 0 & 0 & 6 & 3 & 4
\end{array}
\right).
\end{align*}
\end{example}

The properties of $G$ are discussed in the following lemma.
\begin{lemma}\label{lemma:G}
Regard $G$ as a block matrix
\begin{align*}
G = \left( 
\begin{array}{cc}
G_{11} & G_{12}\\
G_{21} & G_{22}
\end{array}
 \right),
\end{align*}
where $G_{11}$ has size $kn \times kn$, $G_{12}$ has size $kn \times rn$, $G_{21}$ has size $rz\times kn$, and $G_{22}$ has size $rz \times rn$. Then,
\begin{itemize}
\item[(i)] Any $(n-r)(k+r)$ columns of $G$ are linearly independent.
\item[(ii)] $G_{11}$ is a Vandermonde matrix.
\item[(iii)] $G_{21}=0$.
\item[(iv)] Any $rz$ columns of $G_{22}$ are linearly independent.
\end{itemize}
\end{lemma}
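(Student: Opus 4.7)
\medskip
\noindent\textbf{Proof plan.} The whole argument is driven by a single observation: the $(i,j)$-entry of $G=TV$ is $\bm{f}_i[\alpha_j]$, i.e., the $i$-th row polynomial evaluated at the $j$-th Vandermonde node. Parts (ii) and (iii) then fall out by direct inspection, while (iv) and (i) follow from Vandermonde invertibility plus a degree argument on $T$.

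\medskip
\noindent\emph{Parts (ii) and (iii).} First I would write out $G_{ij}=\bm{f}_i[\alpha_j]$ explicitly. For $i=1,\dots,kn$ the polynomial $\bm{f}_i[x]=x^{i-1}$ gives $G_{ij}=\alpha_j^{i-1}$; restricting to $j=1,\dots,kn$ this is exactly a square Vandermonde matrix, proving (ii). For $i=kn+1,\dots,kn+rz$, the polynomial $\bm{f}_i[x]=x^{i-kn-1}\prod_{l=1}^{kn}(x-\alpha_l)$ vanishes at every $\alpha_j$ with $j\le kn$, so $G_{ij}=0$ on the whole block $G_{21}$, proving (iii).

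\medskip
\noindent\emph{Part (iv).} I would factor $G_{22}$ as $G_{22}=V'D$, where $V'$ is the $rz\times rn$ Vandermonde matrix with $V'_{ij}=\alpha_{kn+j}^{i-1}$ and $D$ is the $rn\times rn$ diagonal matrix with $D_{jj}=\prod_{l=1}^{kn}(\alpha_{kn+j}-\alpha_l)$. Since the $\alpha$'s are pairwise distinct and non-zero, each $D_{jj}\neq 0$. Any $rz$ columns of $G_{22}$ therefore equal the corresponding columns of $V'$ scaled by non-zero constants, and any $rz$ columns of a Vandermonde matrix with distinct nodes are linearly independent, giving (iv).

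\medskip
\noindent\emph{Part (i).} A quick arithmetic check first: using $k=n-r-z$,
\begin{align*}
(n-r)(k+r)=nk+r(n-k-r)=nk+rz,
\end{align*}
so the claim is that $G$ is an MDS-type matrix whose maximal sets of linearly independent columns have size equal to the number of rows. The route I would take is to show that the square matrix $T$ is invertible, and then conclude. Invertibility of $T$ is a degree argument: its rows are the polynomials $1,x,\dots,x^{kn-1},P(x),xP(x),\dots,x^{rz-1}P(x)$ with $P(x)=\prod_{l=1}^{kn}(x-\alpha_l)$, which have pairwise distinct degrees $0,1,\dots,kn+rz-1$, and are therefore linearly independent in $\mathbb{F}_q[x]_{<kn+rz}$. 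Hence $T$ is an invertible $(kn+rz)\times(kn+rz)$ matrix. Since $V$ is Vandermonde in distinct non-zero nodes, any $kn+rz$ of its columns are linearly independent; multiplying by the invertible $T$ preserves this, so any $(n-r)(k+r)=kn+rz$ columns of $G=TV$ are linearly independent, proving (i).

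\medskip
\noindent There is no real obstacle here; the only step that requires a moment of thought is recognizing that the degree-stratification of the rows of $T$ is what makes $T$ invertible, which is what converts the Vandermonde property of $V$ into the MDS property of $G$.
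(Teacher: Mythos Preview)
Your proposal is correct and follows essentially the same approach as the paper: the key observation that $G_{ij}=\bm{f}_i[\alpha_j]$, the degree argument for the invertibility of $T$, and the recognition that the columns of $G_{22}$ are Vandermonde columns scaled by nonzero constants all match the paper's proof. The only cosmetic difference is that for (iv) you phrase the argument as a factorization $G_{22}=V'D$, whereas the paper computes the determinant of an $rz\times rz$ submatrix directly; both routes are equivalent.
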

\begin{proof}
By construction, the polynomials $\bm{f}_i[x], i=1,...,kn+rz$ have distinct degrees and therefore are linearly independent. Therefore the rows of $T$ are linearly independent and so $T$ is full rank. This implies that the row space of $G$ is the same as the row space of $V$. The row space of $V$ is a linear $(nk+nr, nk+rz)$ MDS code\footnote{In fact this is the Reed-Solomon code.} because that $V$ is a Vandermonde matrix. Note that $nk+rz = (n-r)(k+r)$, and so the row space of $G$ is a linear $(nk+nr, (n-r)(k+r) )$ MDS code. This proves (i).

To prove (ii), note that by (\ref{eq:ele}), the first $kn$ rows of $G$ are exactly the first $kn$ rows of $V$. Therefore $G_{11}$ is a Vandermonde matrix.

To prove (iii), note that by construction the $(i,j)$-th entry of $G_{21}$ equals $\bm{f}_{kn+i} [\alpha_j] $. By (\ref{eq:flower}), $\alpha_j$ is a root of $\bm{f}_{kn+i}[x]$, for $i=1, ..., rz$, $j=1, ..., kn$. Hence $G_{21}=0$.

Finally we prove (iv). By construction the $(i,j)$-th entry of $G_{22}$ equals
\begin{align}\label{eq:f*}
\bm{f}_{kn+i} [\alpha_{kn+j}] =  \alpha_{kn+j}^{i-1} \prod_{l=1}^{kn}(\alpha_{kn+j}-\alpha_l)= \alpha_{kn+j}^{i-1} \bm{f}^*[ \alpha_{kn+j}],
\end{align}
where $\bm{f}^*[x]=\prod_{l=1}^{kn}(x-\alpha_l)$. Since  $\alpha_1, ..., \alpha_{(k+r)n}$ are distinct elements, it follows that $\bm{f}^*[\alpha_{kn+j}] \ne 0$, for $j=1,...,rn$.
Let $1\le j_1 < j_2 < ... < j_{rz} \le rn$ and consider the submatrix formed by the $j_1$-th,...,$j_{rz}$-th columns of $G_{22}$. By (\ref{eq:f*}), the $l$-th column of the submatrix are formed by consecutive powers of $\alpha_{kn+j_l}$, scaled by $\bm{f}^*[\alpha_{kn+j_l}]$. Therefore the determinant of the submatrix is $\prod_{l=1}^{rz} \bm{f}^*[\alpha_{kn+j_l}] \prod_{1 \le u < v \le rz }(\alpha_{kn+j_v} - \alpha_{kn+j_u}) \ne 0$. This shows that any $rz$ columns of $G_{22}$ are linearly independent.
\end{proof}

\subsection{Decoding}\label{sec:dec}
We describe the decoding procedure for two cases: 1) $|I|=n$, i.e., all nodes  are available, and 2) $|I|<n$. 
First consider the case that $|I|=n$, i.e., $I=[n]$. In order to decode, for this case it suffices to read and communicate the first $k$ symbols over $\mathbb{F}_q$ from each share. Formally, the user downloads $\bm{e} = (c_{1,1}, ..., c_{1,n}, ..., c_{k,1}, ..., c_{k,n})$. By Lemma \ref{lemma:G}.(ii), $G_{11}$ is invertible. Denote the inverse of $G_{11}$ by $G^{-1}_{11}$, then the secret can be recovered by
\begin{align*}
 \bm{e} G^{-1}_{11} \stackrel{(e)}{=} (m_1, ..., m_{k(k+r)}, k_1, ..., k_{kz}),
\end{align*}
where (e) follows from (\ref{eq:enc}) and Lemma \ref{lemma:G}.(iii).  The decoding process involves communicating $kn$ symbols from $\mathbb{F}_q$. 
The communication overhead is $kz$ symbols over $\mathbb{F}_q$ or $\frac{kz}{k+r}=\frac{kz}{n-z}$ $\mathcal{Q}$-ary symbols, which achieves the lower bound (\ref{eq:upbnd}) and therefore is optimal.

Next consider the case that $ n-r \le |I|<n$. Select an arbitrary subset $I'$ of $I$ of size $n-r$, and download the complete share stored by the nodes in $I'$.  Hence, the downloaded information $\bm{e}$ is a length-$(n-r)(k+r)$ vector over $\mathbb{F}_q$. By Lemma \ref{lemma:G}.(i), it follows that any $(n-r)(k+r)$ columns in $G$ are linearly independent and therefore the submatrix formed by these columns is invertible.  The secret $\bm{m}$ can then be recovered by multiplying $\bm{e}$ with the inverse. An alternative way to decode the secret is to notice that $G$ is an encoding matrix of a $(nk+nr,nk+rz)$ Reed-Solomon code over $\mathbb{F}_q$. Therefore one may employ the standard decoder of Reed-Solomon code to correct any $r(k+r)$ erasures or $\lfloor r(k+r)/2 \rfloor$ errors of symbols over $\mathbb{F}_q$. Note that when at most $r$ nodes are unavailable , we regard their shares as erased and there are at most $r(k+r)$ erasures of symbols over $\mathbb{F}_q$, and therefore can be corrected. In general, any $r(k+r)$ erasures or $\lfloor r(k+r)/2 \rfloor$ errors are correctable even if they occur to more than $r$ nodes.
The decoding process involves communicating $nk+rz$ symbols of $\mathbb{F}_q$. The communication overhead  is $(n-r)(k+r)-k(k+r)=z(k+r)$ symbols over $\mathbb{F}_q$, or $z$ symbols over $\mathcal{Q}$, which achieves the lower bound (\ref{eq:upbnd}) if and only if $|I|=n-r$.
\subsection{Analysis}
\begin{theorem}\label{th:con1}
The encoding scheme constructed in Section \ref{sec:enc} is a rate-optimal $(n,k,r,z)$ secret sharing scheme. The scheme achieves the optimal decoding bandwidth when $d$ nodes participate in decoding, for $d =n$ or $d=n-r$.
\end{theorem}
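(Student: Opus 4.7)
The theorem has three parts to verify: reliability (any $n-r$ shares allow decoding), secrecy (any $z$ shares reveal no information), and matching the bandwidth lower bound of Theorem~\ref{th:upbnd} when $|I|=n$ or $|I|=n-r$. Rate-optimality $k=n-r-z$ is built into the construction, so nothing extra is needed there. My plan is to obtain reliability as an immediate consequence of the MDS property established in Lemma~\ref{lemma:G}.(i), to reduce secrecy to a rank condition on the key rows of $G$ and verify it using the block structure of $G$, and to note that the bandwidth claim is already witnessed by the concrete decoder of Section~\ref{sec:dec}.

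For reliability, fix any $I\subset[n]$ with $|I|=n-r$. The user observes $(n-r)(k+r)=nk+rz$ symbols of $\mathbb{F}_q$, indexed by $(n-r)(k+r)$ columns of $G$. By Lemma~\ref{lemma:G}.(i) these columns are linearly independent, so the corresponding square submatrix of $G$ is invertible. Multiplying the observation by its inverse recovers $(\bm{m},\bm{k},\bm{k}')$, and in particular $H(\bm{m}\mid \bm c_I)=0$.

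Secrecy is the substantive step. I would use the standard Ozarow--Wyner-type reduction: since the encoding is linear and the keys $(\bm{k},\bm{k}')$ are independent and uniform, $H(\bm{m}\mid \bm{c}_I)=H(\bm{m})$ holds if and only if the submatrix $G_k^I$, formed by the last $(k+r)z$ rows of $G$ restricted to the $z(k+r)$ columns indexed by $I$, has full column rank. To prove this I would split the columns into the first $kn$ (top $k$ symbols of each share) and the last $rn$ (bottom $r$ symbols). Because the polynomials $\bm f_{kn+l}[x]=x^{l-1}\prod_{s=1}^{kn}(x-\alpha_s)$ vanish at $\alpha_1,\dots,\alpha_{kn}$ --- the same mechanism that yields $G_{21}=0$ in Lemma~\ref{lemma:G}.(iii) --- the matrix $G_k$ is block upper-triangular with respect to this column split. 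Each diagonal block then factors as a Vandermonde-type matrix times a diagonal matrix with nonzero entries: the top block is built from the monomials $x^{k(k+r)},\dots,x^{kn-1}$ evaluated at $\alpha_1,\dots,\alpha_{kn}$, and the bottom block is built from $x^{l-1}f^*(x)$ evaluated at $\alpha_{kn+1},\dots,\alpha_{n(k+r)}$, where $f^*$ is nonzero on these last points. Each share $j\in I$ contributes exactly $k$ columns to the top block and $r$ columns to the bottom block, so by the Vandermonde property the $kz$ and $rz$ selected columns are independent within their respective blocks, and block-triangularity lifts this to full column rank of $G_k^I$.

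The bandwidth claim is then a direct computation, already performed in Section~\ref{sec:dec}: the decoder downloads $kn$ symbols of $\mathbb{F}_q$ when $|I|=n$ and $(n-r)(k+r)$ symbols when $|I|=n-r$, which translate to $\mathcal{Q}$-symbol overheads of $kz/(k+r)=kz/(n-z)$ and $z$, respectively, saturating the bound of Theorem~\ref{th:upbnd} in both regimes. The main obstacle I anticipate is the secrecy argument, specifically identifying the right block decomposition of $G_k$ and tracking that the $z$ observed shares yield the precise column counts $(kz,rz)$ needed for the Vandermonde subargument to go through in each block.
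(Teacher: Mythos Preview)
Your proposal is correct and follows essentially the same route as the paper. Both arguments reduce secrecy to invertibility of the $(k+r)z\times(k+r)z$ matrix formed by the last $(k+r)z$ rows of $G$ restricted to the columns of the $z$ compromised shares, and both establish invertibility via the same block-triangular decomposition (your $kz/rz$ column split is exactly the paper's $G'_{11},G'_{12},G'_{21},G'_{22}$, with $G'_{21}=0$ from Lemma~\ref{lemma:G}.(iii), $G'_{11}$ Vandermonde from Lemma~\ref{lemma:G}.(ii), and $G'_{22}$ nonsingular from Lemma~\ref{lemma:G}.(iv)). The only cosmetic difference is that you invoke an Ozarow--Wyner-type reduction in one line, whereas the paper spells out the equivalent entropy manipulation $H(\bm{k},\bm{k}'|\bm{c}_I,\bm{m})=0\Rightarrow I(\bm{m};\bm{c}_I)=0$ explicitly; reliability and bandwidth optimality are handled identically in both.
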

\begin{proof}
We need to verify that the encoding scheme meets the reliability requirement and the security requirement of a secret sharing scheme, formally defined in Definition \ref{def:ss}. Explicit decoding scheme and its communication overhead are discussed in Section \ref{sec:dec} and therefore the reliability requirement is met.  The scheme is rate-optimal because $k=n-r-z$. We only need to show that the encoding scheme is secure. 
To this end, we first show that $H( \bm{k}, \bm{k}' |\bm{c}_I, \bm{m})=0$, for all $I$ such that $|I| = z$. In other words, the random symbols generated by the encoder are completely determined by $\bm{c}_I$ and the secret. Denote the submatrix formed by the first $k(k+r)$ rows of $G$ by $G_{\text{top}}$ and the submatrix formed by the remaining $(k+r)z$ rows of $G$ by $G_{\text{low}}$. Consider any $I = \{ i_1, ..., i_{z} \}$, and let $\bm{c}_I = (c_{1,i_1}, ..., c_{1,i_{z}}, ..., c_{k+r, i_1}, ..., c_{k+r, i_z})$. It then follows from (\ref{eq:enc}) that
\begin{align*}
\bm{c}_I = (m_1, ..., m_{k(r+k)}) G_{\text{top}, I} + (k_1, ..., k_{kz}, k'_1, ..., k'_{rz}) G_{\text{low},I},
\end{align*}
where $G_{\text{top}, I}$ is the submatrix formed by the subset of columns in $\{ i+j | i \in I, j=0,n,...,(k+r-1)n \}$ of $G_{\text{top}}$, and $G_{\text{low}, I}$ is the submatrix formed by the same subset of columns of $G_{\text{low}}$. Therefore, written concisely, 
\begin{align}
(\bm{k} \ \bm{k}') G_{\text{low},I} = \bm{c}_I - \bm{m} G_{\text{top}, I}. \label{eq:kk'rs}
\end{align}
To study the rank of $G_{\text{low},I}$, note that it is a square matrix of size $(k+r)z$, and we regard it as a block matrix
\begin{align}\label{eq:Glow}
G_{\text{low},I} = \left( 
\begin{array}{cc}
G'_{11} & G'_{12}\\
G'_{21} & G'_{22}
\end{array}
 \right),
\end{align}
where $G'_{11}$ has size $kz \times kz$, $G'_{12}$ has size $ kz \times rz$, $G'_{21}$ has size $rz \times kz$ and $G'_{22}$ has size $rz \times rz$. By Lemma \ref{lemma:G}.(ii), $G'_{11}$ is a block of a Vandermonde matrix and therefore is invertible. By Lemma \ref{lemma:G}.(iii), $G'_{21} = 0$. Denote  $\bm{c}_I - \bm{m} G_{\text{top}, I}$ by $(u_1, ..., u_{(k+r)z})$, then the above two facts together with (\ref{eq:kk'rs}) imply that
\begin{align}\label{eq:k}
\bm{k} =  (u_1, ..., u_{kz}) G'^{-1}_{11}
\end{align}
Therefore $\bm{k}$ is a deterministic function of $\bm{m}$ and $\bm{c}_I$. It follows from (\ref{eq:kk'rs}) that
\begin{align*}
  \bm{k}'G'_{22} = (u_{kz+1}, ..., u_{(k+r)z}) - \bm{k}G'_{12}.
\end{align*}
By Lemma $\ref{lemma:G}$.(iv), $G'_{22}$ is invertible and therefore
\begin{align}
  \bm{k}' = \left( (u_{kz+1}, ..., u_{(k+r)z}) - \bm{k}G'_{12} \right) G'^{-1}_{22}.
\end{align}
This shows that $\bm{k}'$ is a deterministic function of $\bm{k}$, $\bm{c}_I$ and $\bm{m}$, and so 
\begin{align}\label{eq:kk'}
H( \bm{k}, \bm{k}' |\bm{c}_I, \bm{m})=0.
\end{align}
It then follows that,
\begin{align}\label{eq:secfinalstep}
H(\bm{m}) - H(\bm{m}|\bm{c}_I)  & = I(\bm{m} ; \bm{c}_I) \nonumber\\
& = H(\bm{c}_I) - H(\bm{c}_I | \bm{m})\nonumber\\
& \stackrel{(f)}{\le} z - H(\bm{c}_I | \bm{m}) \nonumber\\ 
& \stackrel{(g)}{=}  z - H(\bm{c}_I | \bm{m}) + H(\bm{c}_I | \bm{m}, \bm{k}, \bm{k}')\nonumber\\
& = z - I(\bm{c}_I ;  \bm{k}, \bm{k}' | \bm{m})\nonumber\\
& = z - H( \bm{k}, \bm{k}' | \bm{m} ) + H(\bm{k}, \bm{k}' | \bm{c}_I, \bm{m})\nonumber\\
& \stackrel{(h)}{=} z - H(\bm{k},\bm{k}'|\bm{m})\nonumber\\
& \stackrel{(i)}{=} z - H(\bm{k}, \bm{k}')\nonumber\\
& \stackrel{(j)}{=} z-z=0,
\end{align}
where (f) is due to $|I| = z$; (g) is due to the fact that $\bm{c}_I$ is a function of $\bm{m}$, $\bm{k}$ and $\bm{k}'$; (h) is due to (\ref{eq:kk'});  (i) is due to the fact that $\bm{k}, \bm{k}'$ are independent of $\bm{m}$; and (j) follows from the fact that $\bm{k}, \bm{k}'$ are uniformly distributed.
Therefore $H(\bm{m}) = H(\bm{m}|\bm{c}_I)$ and the security requirement is met. This completes the proof that the encoding scheme is a valid secret sharing scheme.
\end{proof}
Theorem \ref{th:con1} shows that the proposed secret sharing scheme is optimal in terms of storage  usage and is optimal in terms of best-case (i.e., $|I| = n$) communication overhead. Compared to the scheme in the previous section, this scheme has advantages in terms of implementation and error correction because decoding the scheme is equivalent to decoding standard Reed-Solomon codes. The scheme also provides a stronger level of reliability in the sense that it allows decoding even if more than $r$ shares are partially erased. Similar to previous discussion, the scheme achieves the optimal number of symbol-reads from disks when $|I|=n$.
Finally, in the scheme all operations are performed over the field $\mathbb{F}_q$, where $q>n(k+r)$.  This requirement on the field size can be relaxed in the following simple way. Let $\beta$ be the greatest common divisor of $k$ and $r$,  then instead of choosing $\mathcal{Q}$ to be $\mathbb{F}_q^{k+r}$, we can let $\mathcal{Q} = \mathbb{F}_q^{\frac{k}{\beta} + \frac{r}{\beta}}$, $\bm{m} = (m_1, ..., m_{\frac{k(k+r)}{\beta}})$, $\bm{k} = (k_1, ..., k_{\frac{kz}{\beta}})$ and $\bm{k}' = (k'_1, ..., k'_{\frac{rz}{\beta}})$. The resulting scheme is a rate-optimal $(n,k,r,z)_{\mathcal{Q}}$ secret sharing scheme with the same communication overhead function as the original scheme. For this modified construction, it is sufficient to choose any field size $q>n\frac{k+r}{\beta}$. 
\section{Secret Sharing Schemes from Random Codes}\label{sec:rdm}
In this section we describe a rate-optimal (perfect or ramp) secret sharing scheme based on random linear codes that achieves the optimal decoding bandwidth universally. The scheme meets the secrecy requirement deterministically, and meets the reliability requirement with high probability as the field size grows.

\subsection{Encoding}\label{sec:enc2}
Let $k=n-r-z$, $q$ be a prime power, and let $N$ be the least common multiple of $\{n-z-r, , n-z-r+1, ..., n-z\}$. Set 
$\mathcal{Q} = \mathbb{F}_q^{N(k+r)}$. Therefore each share of the secret is a length $N(k+r)$ vector over $\mathbb{F}_q$. For $j=1,...,n$, denote the $j$-th share by $c_j = (c_{1,j}, ..., c_{N(k+r),j})$, where $c_{i,j} \in \mathbb{F}_q$. The secret $\bm{m}$ consists of $k$ symbols over $\mathcal{Q}$ and is regarded as a length-$Nk(k+r)$ vector over $\mathbb{F}_q$, denoted by $(m_1,...,m_{Nk(k+r)})$. The encoder generates uniformly distributed random vectors $\bm{k}=(k_1, ..., k_{Nkz}) \in \mathbb{F}_q^{Nkz}$ and $\bm{k}'=(k'_1,...,k'_{Nrz}) \in \mathbb{F}_q^{Nrz}$, independently from $\bm{m}$. The encoding scheme is described by a set of $N(kn+rz) \times n$ encoding matrices $G_i, i=1,...,N(k+r)$ over $\mathbb{F}_q$, such that
\begin{align}\label{eq:enc2}
(c_{i,1}, ..., c_{i,n}) = (\bm{m} \ \bm{k} \ \bm{k}')G_{i}, \ \ \ \ \ i=1, ..., N(k+r).
\end{align}
Intuitively, if the $c_{u,v}$'s are arranged into a matrix, then $G_i$ is the encoding matrix for the $i$-th row. We next describe the construction of the $G_i$ matrices. For $i=1, ..., Nk$, let the first $Nk(k+r)$ rows of $G_i$ be a random matrix, let the next $Nkz$ rows of $G_i$ be a Vandermonde matrix, and let the remaining $Nrz$ rows of $G_i$ be zero. Formally, for $i=1, ...,Nk$,
\begin{align}\label{eq:Gi}
G_i = \left( \begin{array}{c} R_i \\ V_i \\ \bm{0} \end{array} \right),
\end{align}
where $R_i \in \mathbb{F}_q ^{Nk(k+r) \times n}$ is a random matrix with entries i.i.d. uniformly distributed over $\mathbb{F}_q$, and $V_i \in \mathbb{F}_q^{Nkz \times n}$ is a Vandermonde matrix, i.e.,  the $(u,v)$-th entry of $V_i$ equals $\alpha_{v,i}^{u-1}$. Here $\alpha_{v,i}$ are distinct non-zero elements of $\mathbb{F}_q$, for $i=1,...,N(k+r)$, and $v=1,...,n$.

For $i=1 ,..., Nr$, let the first $Nkn + (i-1)z$ rows of $G_{Nk+i}$ be a random matrix, let the next $z$ rows of $G_{Nk+i}$ be a Vandermonde matrix, and let the remaining $(Nr-i)z$ rows of $G_{Nk+i}$ be zero. Formally, for $i=1,...,Nr$,
\begin{align}\label{eq:Gki}
G_{Nk+i} = \left( \begin{array}{c} R_{Nk+i} \\ V_{Nk+i} \\ \bm{0} \end{array} \right),
\end{align}
where $R_{Nk+i} \in \mathbb{F}_q ^{(Nkn+(i-1)z) \times n}$ is a random matrix with entries i.i.d. uniformly distributed over $\mathbb{F}_q$, and $V_{Nk+i} \in \mathbb{F}_q^{z \times n}$ is a Vandermonde matrix, i.e.,  the $(u,v)$-th entry of $V_{Nk+i}$ equals $\alpha_{v,Nk+i}^{u-1}$. This completes the encoding scheme. The structure of the whole encoding matrix $(G_1, ..., G_{N(k+r)})$ is illustrated in Figure \ref{fig:rndG}.
 \begin{figure}[htb]
 \centering
 \includegraphics[width=0.7\textwidth]{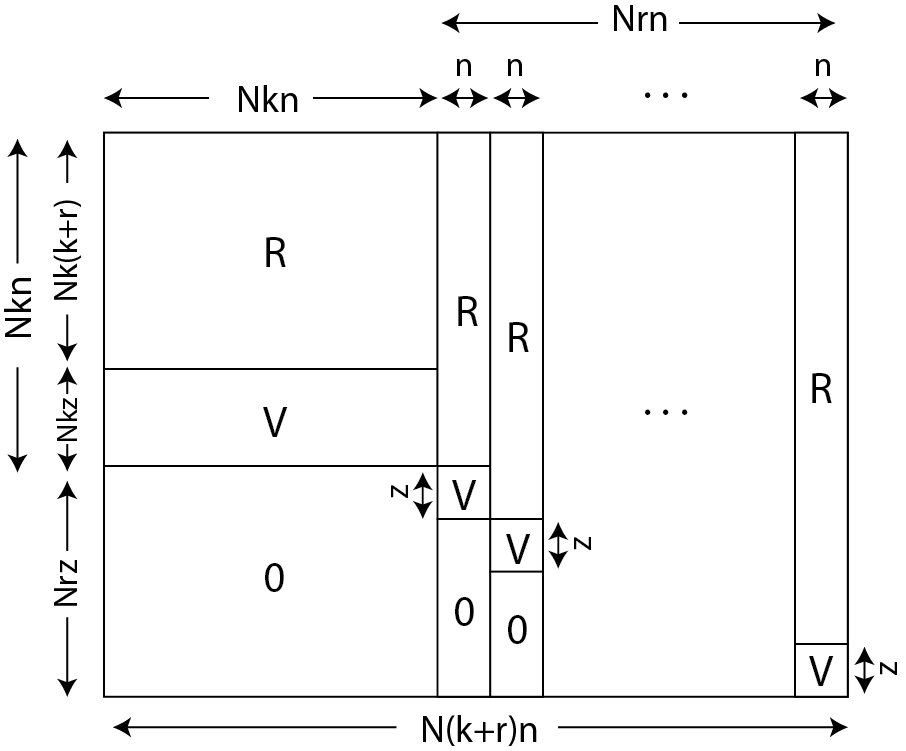}
                 \caption{Blockwise structure of the matrix $(G_1, ..., G_{N(k+r)})$. Blocks of random matrices are labelled by {\fontfamily{qhv}\selectfont R}, blocks of Vandermonde matrices are labelled by {\fontfamily{qhv}\selectfont V}, and blocks of zero matrices are labelled by {\fontfamily{qhv}\selectfont 0}. }
                \label{fig:rndG}
\end{figure}

The following result shows that the scheme meets the security requirement deterministically, due to the Vandermonde matrices embedded in the $G_i$'s. 
\begin{theorem}\label{th:sec2}
The encoding scheme constructed in this section is secure, i.e., $H(\bm{m}|\bm{c}_I) = H(\bm{m})$, for all $I$ such that $|I|=z$.
\end{theorem}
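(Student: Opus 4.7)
The plan is to follow the same template as the proof of Theorem \ref{th:con1}: establish the key determinacy relation $H(\bm{k},\bm{k}' \mid \bm{c}_I,\bm{m})=0$ for every $I$ with $|I|=z$, and then close the argument with the same entropy chain used in (\ref{eq:secfinalstep}). The point to check is that enough Vandermonde structure is embedded in the $G_i$'s to recover all of $\bm{k}$ and $\bm{k}'$ from $(\bm{m},\bm{c}_I)$; once that is done, the counting $|\bm{k}|+|\bm{k}'| = Nkz+Nrz = Nz(k+r) = H(\bm{c}_I\text{ max})$ gives the security conclusion.

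First I would recover $\bm{k}$ from the top $Nk$ rows of the encoded array. For $i=1,\dots,Nk$, equation (\ref{eq:enc2}) restricted to columns in $I$ reads
\begin{align*}
(c_{i,v})_{v\in I} \;=\; \bm{m}\,(R_i)_I \;+\; \bm{k}\,(V_i)_I,
\end{align*}
because the $\bm{k}'$-block of $G_i$ is zero by (\ref{eq:Gi}). Concatenating these equations over $i=1,\dots,Nk$ yields a single linear system in the unknown $\bm{k}\in\mathbb{F}_q^{Nkz}$ whose coefficient matrix is the horizontal concatenation $[(V_1)_I\mid\cdots\mid(V_{Nk})_I]$, a square $Nkz\times Nkz$ matrix whose columns are the vectors $(1,\alpha_{v,i},\alpha_{v,i}^2,\dots,\alpha_{v,i}^{Nkz-1})^{\!\top}$ for $(i,v)\in\{1,\dots,Nk\}\times I$. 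Since the scalars $\alpha_{v,i}$ are chosen to be pairwise distinct across all $(v,i)$, this is a standard Vandermonde matrix and is invertible, so $\bm{k}$ is a deterministic function of $(\bm{m},\bm{c}_I)$.

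Next I would recover $\bm{k}'$ in $Nr$ successive blocks of length $z$. Writing $\bm{k}'=(\bm{k}'_1,\dots,\bm{k}'_{Nr})$ with $\bm{k}'_j\in\mathbb{F}_q^z$, the block structure (\ref{eq:Gki}) gives, for $i=1,\dots,Nr$,
\begin{align*}
(c_{Nk+i,v})_{v\in I} \;=\; \Phi_i(\bm{m},\bm{k},\bm{k}'_1,\dots,\bm{k}'_{i-1}) \;+\; \bm{k}'_i\,(V_{Nk+i})_I,
\end{align*}
where $\Phi_i$ is an affine function coming from $R_{Nk+i}$. Here $(V_{Nk+i})_I$ is the square $z\times z$ Vandermonde matrix in the distinct points $\{\alpha_{v,Nk+i}:v\in I\}$, hence invertible. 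A straightforward induction on $i$ now expresses each $\bm{k}'_i$ as a deterministic function of $\bm{m}$, $\bm{c}_I$, and the already-recovered portions of $\bm{k}$ and $\bm{k}'$, giving $H(\bm{k},\bm{k}'\mid \bm{c}_I,\bm{m})=0$.

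Finally, I would replay the entropy manipulation in (\ref{eq:secfinalstep}) verbatim, with the only numerical change being that now $H(\bm{c}_I)\le |\bm{c}_I|=Nz(k+r)$ measured in $\mathbb{F}_q$-symbols, while $H(\bm{k},\bm{k}')=Nkz+Nrz=Nz(k+r)$ by the uniformity and independence of the keys. The inequality $H(\bm{m})-H(\bm{m}\mid\bm{c}_I)\le 0$ follows, yielding $H(\bm{m}\mid\bm{c}_I)=H(\bm{m})$. The main obstacle to watch is the invertibility step for $\bm{k}$: one must use that the $\alpha_{v,i}$'s are distinct \emph{across} $i$ as well as $v$, so that the concatenated $[(V_i)_I]_{i=1}^{Nk}$ is a genuine Vandermonde matrix on $Nkz$ distinct evaluation points rather than a block-diagonal matrix that would only give rank $Nkz$ under further conditions. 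Everything else is bookkeeping.
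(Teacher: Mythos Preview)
Your proposal is correct and follows essentially the same approach as the paper: first recover $\bm{k}$ via the $Nkz\times Nkz$ Vandermonde matrix obtained by concatenating the $(V_i)_I$'s (relying, as you rightly flag, on the $\alpha_{v,i}$ being distinct across both indices), then recover $\bm{k}'$ block by block using the $z\times z$ Vandermonde matrices $(V_{Nk+i})_I$, and finally replay the entropy chain (\ref{eq:secfinalstep}). The only cosmetic difference is that the paper carries out the final entropy count in $\mathcal{Q}$-ary symbols (getting $z$ on both sides) whereas you work in $\mathbb{F}_q$-symbols (getting $Nz(k+r)$ on both sides), which is of course equivalent.
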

\begin{proof}
Consider any $I$ such that $|I|=z$. As in Theorem \ref{th:con1}, we first show that $H(\bm{k},\bm{k}'|\bm{c}_I, \bm{m})=0$. Denote by $G_{i,I}$, $R_{i,I}$ and $V_{i,I}$ the submatrix formed by the set of columns in $I$ of $G_i$, $R_i$ and $V_i$, respectively.  Let
\begin{align*}
G_{1 \to Nk, I} &= (G_{1,I} \ ...\ G_{Nk,I})\\
R_{1 \to Nk, I} &= (R_{1,I} \ ... \ R_{Nk,I})\\
V_{1 \to Nk, I} &= (V_{1,I} \ ... \ V_{Nk,I}).
\end{align*}
Denote for short $\mathfrak{c}_{i} = (c_{i,j})_{j \in I}$, then by (\ref{eq:enc2}), it follows that
\begin{align*}
 (\bm{m} \ \bm{k} \ \bm{k}')G_{1 \to Nk, I} & =  (\bm{m}\ \bm{k} \ \bm{k}') \left( \begin{array}{c} R_{1 \to Nk, I} \\ V_{1 \to Nk, I} \\ \bm{0} \end{array} \right) \\
 & = (\mathfrak{c}_1 , ..., \mathfrak{c}_{Nk}).
\end{align*}
Notice that $V_{1 \to Nk, I}$ is a $Nkz \times Nkz$ square Vandermonde matrix. Therefore it is invertible and 
\begin{align*}
\bm{k} = \left( (\mathfrak{c}_1 , ..., \mathfrak{c}_{Nk}) - \bm{m}R_{1 \to Nk, I} \right) V^{-1}_{1 \to Nk, I}.
\end{align*}
Hence $H(\bm{k} | \bm{c}_I, \bm{m} )=0$. Then by (\ref{eq:enc2}), it follows that
\begin{align*}
(\bm{m} \ \bm{k}\  |\  k'_1, ..., k'_z|\  k'_{z+1}, ..., k'_{Nrz} ) \left( \begin{array}{c} R_{Nk+1,I}  \\ \hline V_{Nk+1, I} \\ \hline \bm{0} \end{array} \right) = \mathfrak{c}_{Nk+1}.
\end{align*}
Notice that $V_{Nk+1, I}$ is a $z \times z$ square Vandermonde matrix. Therefore it is invertible and 
\begin{align*}
(k'_1, ..., k'_z) = \left( \mathfrak{c}_{Nk+1} - (\bm{m} \ \bm{k}) R_{Nk+1, I } \right) V^{-1}_{Nk+1,I}.
\end{align*}
Hence $H(k'_1, ..., k'_z | \bm{k}, \bm{c}_I, \bm{m}) = 0$. Similarly, we can show that for $i=1,...,Nr$ $$H(k'_{(i-1)z+1}, ..., k'_{iz} | k'_{1}, ..., k'_{(i-1)z}, \bm{k}, \bm{c}_I, \bm{m}) = 0.$$ 
Therefore by the chain rule,
\begin{align}\label{eq:sec2kk'}
H(\bm{k},\bm{k}' | \bm{c}_I, \bm{m}) & = H(\bm{k} | \bm{c}_I, \bm{m}) + \sum_{i=1}^{Nr} H(k'_{(i-1)z+1}, ..., k'_{iz} | k'_{1}, ..., k'_{(i-1)z}, \bm{k}, \bm{c}_I, \bm{m})\nonumber\\
& = 0.
\end{align}
Provided that $(\ref{eq:sec2kk'})$ is true, we can then follow exactly the same argument as (\ref{eq:secfinalstep}) in the proof of Theorem \ref{th:con1}, to show that $H(\bm{m}|\bm{c}_I) = H(\bm{m})$. This completes the proof.
\end{proof}
\subsection{Decoding}\label{sec:dec2}
We describe the decoding scheme for any $I$ such that $|I| \ge n-r$. Let  
\begin{align}\label{eq:defd}
d \triangleq  \frac{Nk(n-|I|)}{|I|-z},
\end{align}
then note that $d$ is an integer because $|I|-z$ divides $N$ and that $d$ is the solution to the equation  $(Nk+d)|I| = Nkn+dz$.
In order to decode, it suffices to read and communicate the first $Nk+d$ symbols over $\mathbb{F}_q$ from each available share. Intuitively, by reading the first $Nk+d$ symbols from each available share, we have a system of $(Nk+d)|I|$ equations. On the other hand, the variables involved in these equations are $m_1, ..., m_{Nk(k+r)}$, $k_1, ..., k_{Nkz}$ and $k'_1, ..., k'_{dz}$, i.e., the total number of variables is $Nkn+dz$. Because $d$ is the solution to $(Nk+d)|I| = Nkn+dz$, the number of equations in the system equals the number of variables, and is uniquely solvable if the equations are linearly independent.

Formally, let $\mathcal{S}_i = \mathbb{F}_q^{Nk+d}$, and let $E_i(c_i) = (c_{1,i}, ..., c_{Nk+d,i})$. Denote for short that $\mathfrak{c}_{i} = (c_{i,j})_{j \in I}$, and denote  the submatrix formed by the set of columns in $I$ of $G_i$ by $G_{i,I}$. Then it follows from (\ref{eq:enc2}) that,
\begin{align*}
\bm{e}_I = ( \mathfrak{c}_1 , ...,  \mathfrak{c}_{Nk+d} )= (\bm{m} \ \bm{k}\ \bm{k}') ( G_{1,I}, ..., G_{Nk+d,I} ).
\end{align*}
By construction (\ref{eq:Gi}) and (\ref{eq:Gki}), the last $(Nr-d)z$ rows of the matrices $ G_{1,I}, ..., G_{Nk+d,I} $ are all zeros. Therefore we may delete the last $(Nr-d)z$ rows from $ G_{1,I}, ..., G_{Nk+d,I}$ and denote by $( {G}^*_{1,I}, ..., {G}^*_{Nk+d,I} )$ the corresponding trimmed matrix. It then follows that,
\begin{align*}
\bm{e}_I = ( \mathfrak{c}_1 , ...,  \mathfrak{c}_{Nk+d} )= (\bm{m} \ \bm{k}\ k'_1, ..., k'_{dz}) ( {G}^*_{1,I}, ..., {G}^*_{Nk+d,I} ).
\end{align*}
It is now evident that if the matrix $(G^*_{1,I}, ..., G^*_{Nk+d,I})$ has full row rank, then it is right invertible and  the secret can be recovered. The following result shows that the matrix indeed has full row rank with high probability. 
\begin{theorem}\label{th:dec2}
For any $I$ such that $|I| \ge n-r$, $(G^*_{1,I}, ..., G^*_{Nk+d,I})$ has full row rank with probability at least $1-\frac{1}{q-1}$, over the distribution of the random matrices $R_1, ..., R_{Nk+d}$.
\end{theorem}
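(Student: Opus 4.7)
The plan is to reduce the full-row-rank question for the square matrix $M = (G^*_{1,I}, \ldots, G^*_{Nk+d,I})$ (of size $Nkn+dz$ by the defining equation of $d$) to the standard singularity probability of a uniformly random square matrix over $\mathbb{F}_q$. First I would partition the rows of $M$ into three groups R1, R2, R3: R1 is the top $Nk(k+r) = Nk(n-z)$ rows, in which every column block contributes random entries of some $R_i$; R2 is the next $Nkz$ rows, in which the column blocks $i \le Nk$ contribute the $V_i$-Vandermonde rows (call their horizontal concatenation $V^{11}$) while the blocks $Nk+j$ contribute further random entries; and R3 is the last $dz$ rows, which are zero in the blocks $i\le Nk$ and form a block lower-triangular pattern over the blocks $Nk+j$, with the Vandermondes $V_{Nk+j,I}$ on the block diagonal and random entries off-diagonal. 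I would group the columns into $X$ (blocks $1,\ldots,Nk$) and $Y$ (blocks $Nk+1,\ldots,Nk+d$).

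Exploiting that the $\alpha_{v,i}$ are distinct, the next step is to pick a set $P_X \subset X$ of $Nkz$ columns so that the restriction $V_{23}$ of $V^{11}$ to $P_X$ is an invertible $Nkz\times Nkz$ Vandermonde, and in each block $Nk+j$ to take the same $z$ columns (indexed by any fixed subset $J\subset I$ with $|J|=z$) as pivots; their union $P_Y$ makes the submatrix $T_{34}$ formed by R3 and $P_Y$ block lower-triangular with invertible diagonal Vandermondes $V_{Nk+j,J}$, hence invertible. Because $V_{23}$ and $T_{34}$ are invertible and entirely deterministic, I can perform a sequence of determinant-preserving row and column operations (first use $V_{23}$ and $T_{34}$ to zero out the Vandermonde entries of R2 and R3 lying in the non-pivot columns, then use $V_{23}$ and $T_{34}$ to zero out the random entries $R_{13}$, $R_{14}$, $R_{24}$ and the modified middle block $R_{22}$) that transform $M$ into the block-diagonal form $\operatorname{diag}(S, V_{23}, T_{34})$. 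Consequently $\det(M) = \pm \det(S)\det(V_{23})\det(T_{34})$, and the last two factors are nonzero by construction.

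The surviving block $S$ is the R1 submatrix on the non-pivot columns, whose total number is $(Nk+d)(|I|-z) = Nk(n-z)$ by the identity $d(|I|-z)=Nk(n-|I|)$ coming from the definition of $d$, so $S$ is square of side $Nk(n-z)$. Tracing the elimination, the columns of $S$ inside $X$ equal $R_{11}$ minus a deterministic linear function of $R_{13}$, and those inside $Y$ equal $R_{12}$ minus a deterministic linear function of $R_{13}$, $R_{14}$, $R_{22}$, $R_{24}$. Since $R_{11}$ and $R_{12}$ are independent of all other $R_i$ and appear only in disjoint column groups of $S$, conditioning on $R_{13}, R_{14}, R_{22}, R_{24}$ makes $S$ a uniform i.i.d.\ matrix over $\mathbb{F}_q^{Nk(n-z)\times Nk(n-z)}$ (a deterministic shift of a uniform matrix is still uniform). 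The standard counting $\Pr[\det(S)=0]=1-\prod_{i=1}^{Nk(n-z)}(1-q^{-i})\le\sum_{i\ge 1}q^{-i}=\frac{1}{q-1}$ then yields the claimed bound. The main technical hurdle will be the careful bookkeeping of the row and column operations, particularly verifying that after the elimination $R_{11}$ and $R_{12}$ contribute independently to disjoint column groups of $S$ while all other random matrices appear merely as shifts under the conditioning; once this is confirmed, the singularity estimate is immediate.
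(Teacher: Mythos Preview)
Your approach is correct and genuinely different from the paper's. The paper argues column by column: it shows the first $Nkz$ columns are independent thanks to the Vandermonde block, then bounds for each subsequent column the probability that it falls into the span of its predecessors (using that only a limited number of linear combinations can match the Vandermonde coordinates, while the random coordinates are fresh), and finishes with a union bound that telescopes to the geometric series $\sum_{i=1}^{Nk(n-z)} q^{-i}<\frac{1}{q-1}$. Your route is more structural: you perform a block Schur elimination to reduce $\det M$ to $\det S$ times two deterministic nonzero factors, and then invoke the standard singularity bound for a uniform $Nk(n-z)\times Nk(n-z)$ matrix over $\mathbb{F}_q$. Both yield exactly the same numerical bound; your argument exposes more clearly \emph{why} the problem reduces to a pure random-matrix question, while the paper's argument avoids the bookkeeping of the elimination.

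One correction to your bookkeeping: $T_{34}$ is \emph{not} ``entirely deterministic.'' For the block $G^*_{Nk+j,I}$ the rows $Nkn+1,\ldots,Nkn+(j-1)z$ come from $R_{Nk+j}$, so the strictly lower-triangular blocks of $T_{34}$ (and likewise of $T_{32}$) are random. This does not hurt invertibility---a block lower-triangular matrix with invertible diagonal blocks is invertible regardless of the off-diagonal entries---and it does not break your uniformity argument either: the random entries appearing in $T_{32}$ and $T_{34}$ live in rows $>Nkn$ of the $R_{Nk+j}$ and are therefore disjoint from (hence independent of) $R_{11}$ and $R_{12}$, which occupy rows $1,\ldots,Nk(n-z)$. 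You simply need to enlarge the conditioning set from $\{R_{13},R_{14},R_{22},R_{24}\}$ to also include these lower-row random entries; then $S=[R_{11}',R_{12}'']$ is a deterministic translate of $[R_{11},R_{12}]$ and hence uniform as claimed. With that fix, the argument is complete.
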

\begin{proof}
Note that $(G^*_{1,I}, ..., G^*_{Nk+d,I})$ has size $(Nkn+dz) \times (Nk+d)|I|$. By the definition of $d$, it follows that $Nkn+dz = (Nk+d)|I|$ and therefore the matrix is square. Hence it suffices to show the matrix has full column rank and in the following we show the columns of the matrix are linearly independent with high probability.

The first $Nkz$ columns of $(G^*_{1,I}, ..., G^*_{Nk,I})$ are linearly independent because by (\ref{eq:Gi}), the $(Nkn-Nkz+1)$-th row to the $Nkn$-th row form a Vandermonde matrix. Denote the $i$-th column of $(G^*_{1,I}, ..., G^*_{Nk+d,I})$ by $\bm{g}_i$. We first study the probability that the $\bm{g}_{Nkz+i}$ is in the linear span of all the previous columns, i.e., span$[\bm{g}_1, ..., \bm{g}_{Nkz+i-1}]$, for $i=1,...,Nk(|I|-z)$. 
Consider the sum of vectors $ \bm{g}^*  = \sum_{l=1}^{NKz+i-1} \gamma_l \bm{g}_l$. Fixing $\gamma_1, ..., \gamma_{i-1}$ to be arbitrary values in  $\mathbb{F}_q$, then there is a unique tuple $(\gamma_{l})_{l=i}^{NKz+i-1}$ such that $\bm{g}^*$ agrees with $\bm{g}_{NKz+i}$ in the $(Nkn-Nkz+1)$-th to $Nkn$-th entries.
Therefore there are $q^{i-1}$ different ways to linearly combine $\bm{g}_1, ..., \bm{g}_{Nkz+i-1}$, such that in the resulting sum vector, the $(Nkn-Nkz+1)$-th to $Nkn$-th entries are equal to the corresponding entries of $\bm{g}_{Nkz+i}$. Because the first $Nk(n-z)$ entries of $\bm{g}_{Nkz+i}$ are i.i.d. uniformly distributed, it follows that
\begin{align}\label{eq:left}
\Pr \{ \bm{g}_{Nkz+i} \in \text{span}[\bm{g}_1, ..., \bm{g}_{Nkz+i-1}] \} \le \frac{q^{i-1}}{q^{Nk(n-z)}}, \ \ \ \ i=1,...,Nk(|I|-z)
\end{align}

We next study the probability that $\bm{g}_{Nk|I| + i}$ is in span[$\bm{g}_1, ..., \bm{g}_{Nk|I|+i-1}$]. Consider arbitrary $Nk \le j \le Nk+d-1$. By construction (\ref{eq:Gki}), $\bm{g}_{j|I| + i} \notin \text{span}[\bm{g}_1, ...,  \bm{g}_{j|I| + i -1}]$, for $1 \le i \le z$, due to the Vandermonde matrix $V_{Nk+j}$. Now consider $\bm{g}_{j|I| + i}$ with $z+1 \le i \le |I|$. There are $q^{j|I|+i-z-1}$ different ways to linearly combine $\bm{g}_1, ...,  \bm{g}_{j|I| + i -1}$, such that in the resulting sum vector, the $(Nkn + (j-Nk)z+1)$-th to $(Nkn + (j-Nk)z+z)$-th entries are equal to the corresponding entries in $\bm{g}_{j|I|+i}$. Note that the first $Nkn+(j-Nk)z$ entries of $\bm{g}_{j|I| + i}$ are i.i.d. uniformly distributed. Therefore, for $Nk \le j \le Nk+d-1$ and $z+1 \le i \le |I|$, it follows that
\begin{align}\label{eq:right}
\Pr \{ \bm{g}_{j|I|+i} \in \text{span}[\bm{g}_1, ..., \bm{g}_{j|I|+i-1}] \} \le \frac{q^{j|I|+i-z-1}}{q^{Nkn+(j-Nk)z}}.
\end{align}
Hence, by the union bound
\begin{align}
\Pr \{(G^*_{1,I}, ..., G^*_{Nk+d,I}) \text{ singular}\} & \nonumber\\  
& \hspace{-30mm} \le \sum_{i=1}^{Nk(|I|-z)} \Pr \{ \bm{g}_{Nkz+i} \text{ l.d.}^2\} + \sum_{j=Nk}^{Nk+d-1}\sum_{i=z+1}^{|I|} \Pr \{ \bm{g}_{j|I|+i} \text{ l.d.}\} \nonumber\\
 & \hspace{-30mm} \stackrel{(k)}{\le}   \sum_{i=1}^{Nk(|I|-z)} \frac{q^{i-1}}{q^{Nk(n-z)}} + \sum_{j=Nk}^{Nk+d-1} \sum_{i=z+1}^{|I|} \frac{q^{j|I| + i -z -1}}{q^{Nkn+(j-Nk)z}}  \nonumber\\
 & \hspace{-30mm} = \sum_{i=Nk(z-n)}^{Nk(|I|-n) -1 } q^{i}  +  \sum_{j=Nk}^{Nk+d-1} \sum_{i=z+1}^{|I|} \frac{q^{j|I| + i -z -1}}{q^{Nkn+(j-Nk)z}}  \nonumber\\
 & \hspace{-30mm} = \sum_{i=Nk(z-n)}^{Nk(|I|-n) -1 } q^{i}  + \sum_{j=Nk}^{Nk+d-1} \sum_{i=0}^{|I|-z-1} q^{j(|I|-z)+i+Nk(z-n)}  \nonumber\\
  & \hspace{-30mm} = \sum_{i=Nk(z-n)}^{Nk(|I|-n) -1 } q^{i}  + q^{Nk(z-n)} \sum_{j=Nk}^{Nk+d-1} \sum_{i=0}^{|I|-z-1} q^{j(|I|-z)+i}  \nonumber\\
  & \hspace{-30mm} = \sum_{i=Nk(z-n)}^{Nk(|I|-n) -1 } q^{i}  + q^{Nk(z-n)} \sum_{i=Nk(|I|-z)}^{(Nk+d)(|I|-z)-1}  q^i\nonumber\\
  & \hspace{-30mm} = \sum_{i=Nk(z-n)}^{Nk(|I|-n) -1 } q^{i}  + \sum_{i=Nk(|I|-n)}^{(Nk+d)|I|-Nkn-dz-1}  q^i\nonumber\\
 &  \hspace{-30mm} \stackrel{(l)}{=} \sum_{i=Nk(z-n)}^{Nk(|I|-n) -1 } q^{i}  + \sum_{i=Nk(|I|-n)}^{-1} q^i   \nonumber\\
 & \hspace{-30mm}  < \sum_{i=-\infty}^{-1} q^i = \frac{1}{q-1},
 \end{align}
\addtocounter{footnote}{1}
\footnotetext{Linearly dependent on the set of columns to the left.}
where ($k$) is due to (\ref{eq:left}) and (\ref{eq:right}), and ($l$) is due to (\ref{eq:defd}). This completes the proof.
\end{proof}
The following result summarizes the properties of scheme.
\begin{corollary}
The encoding scheme constructed in Section \ref{sec:enc2} is a rate-optimal $(n,k,r,z)$ secret sharing scheme with high probability. Specifically, the scheme meets the security requirement deterministically, and meets the reliability requirement with probability at least $1 - \frac{2^n}{q-1}$, over the distribution of the random matrices $R_1, ..., R_{Nk+d}$. The scheme achieves the optimal decoding bandwidth when $d$ nodes participate in decoding, universally for all $n-r \le d \le n$. 
\end{corollary}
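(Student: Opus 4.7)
The plan is to package the already-established pieces and apply a union bound. Rate-optimality is immediate since the construction sets $k = n-r-z$, so the rate $k/n$ attains the upper bound in Proposition~\ref{prop:cap}. The security requirement (\ref{def:sr}) follows directly from Theorem~\ref{th:sec2}, which handles any $I$ with $|I|=z$ deterministically and makes no reference to the random matrices $R_i$.

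Next, I would handle reliability. Fix any $I$ with $|I| \ge n-r$ and let $d$ be as in (\ref{eq:defd}). The decoding procedure in Section~\ref{sec:dec2} shows that $\bm{m}$ can be recovered from $\bm{e}_I$ whenever the trimmed matrix $(G^*_{1,I}, \dots, G^*_{Nk+d,I})$ has full row rank, i.e., is right-invertible. By Theorem~\ref{th:dec2}, this fails with probability at most $1/(q-1)$ over the choice of $R_1, \dots, R_{Nk+d}$. A union bound over all $I \subseteq [n]$ with $|I| \ge n-r$ then gives
\begin{align*}
\Pr\{\text{decoding fails for some such } I\} \;\le\; \sum_{d=n-r}^{n} \binom{n}{d} \cdot \frac{1}{q-1} \;\le\; \frac{2^n}{q-1}.
\end{align*}
In particular, on this event every $I$ with $|I|=n-r$ permits recovery, so (\ref{def:re}) holds; and simultaneously decoding succeeds for every intermediate $|I|$ in $[n-r,n]$.

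It remains to verify that, on the good event, the bandwidth matches the lower bound in Theorem~\ref{th:upbnd}. Each of the $|I|$ participating nodes transmits $Nk+d$ symbols of $\mathbb{F}_q$, and substituting $d = Nk(n-|I|)/(|I|-z)$ gives $Nk+d = Nk(n-z)/(|I|-z)$. Since $k+r = n-z$, one $\mathcal{Q}$-symbol equals $N(k+r) = N(n-z)$ symbols of $\mathbb{F}_q$, so the total download measured in $\mathcal{Q}$-symbols is
\begin{align*}
\frac{|I|(Nk+d)}{N(k+r)} \;=\; \frac{|I|\,k(n-z)}{(|I|-z)(n-z)} \;=\; \frac{k|I|}{|I|-z} \;=\; k + \frac{kz}{|I|-z},
\end{align*}
which matches the lower bound in (\ref{eq:upbnd}) exactly, for every $n-r \le |I| \le n$.

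The only step that required genuine work is the rank analysis of Theorem~\ref{th:dec2}, which I am invoking as a black box; the remaining ingredients are assembly. The one mild subtlety is to notice that the union bound must range over all $I$ with $|I| \ge n-r$ (not just the minimal sets) so that the universal bandwidth optimality claim holds on a single high-probability event, but the crude estimate $\sum_{d \ge n-r} \binom{n}{d} \le 2^n$ is already sufficient to deliver the stated probability $1 - 2^n/(q-1)$.
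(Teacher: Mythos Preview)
Your proposal is correct and follows essentially the same approach as the paper: invoke Theorem~\ref{th:sec2} for secrecy, apply a union bound on Theorem~\ref{th:dec2} over all $I$ with $|I|\ge n-r$ (the paper writes this as $\sum_{i=0}^{r}\binom{n}{i}\le 2^n$), and then verify the bandwidth equals $k+\frac{kz}{|I|-z}$ by direct substitution of $d$ from (\ref{eq:defd}). Your observation that the union bound must cover all $|I|\ge n-r$, not just the minimal sets, so that universal optimality holds on a single event, is exactly how the paper proceeds.
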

\begin{proof}
The scheme achieves capacity because $k=n-r-z$. By Theorem  \ref{th:sec2}, the scheme meets the security requirement. By Theorem \ref{th:dec2} and the union bound, the scheme meets the reliability requirement with probability at least $1-\sum_{i=0}^{r} {n \choose i} \frac{1}{q-1} \ge 1 - \frac{2^n}{q-1}$. 

Consider any $I$ such that $|I|\ge n-r$. In order to decode, a number of $(Nk+d)|I|$ symbols over $\mathbb{F}_q$ are communicated. Therefore the communication overhead is
\begin{align*}
\text{CO}(I)  &= \frac{(Nk+d)|I| - Nk(k+r)}{N(k+r)}\\
& = \frac{Nkn + dz - Nk(k+r)}{N(k+r)}\\
& = \frac{Nkz + dz}{N(k+r)} = \frac{z(Nk + \frac{Nk(n - |I|)}{|I| - z})}{ N(k+r)} \\
& = \frac{Nk(n-z)z}{(|I|-z)N(k+r)} = \frac{kz}{|I| - z},
\end{align*}
which achieves equality in (\ref{eq:upbnd}).
\end{proof}


\section{Conclusions}
In this paper we study the communication efficiency of secret sharing schemes in decoding. We prove an information-theoretic lower bound on the amount of information to be communicated during decoding, and show that the decoding bandwidth decreases as $d$, the number of nodes that participate in decoding, increases. We prove that the bound is uniformly tight by designing a secret sharing scheme that achieves the optimal decoding bandwidth universally for all valid $d$. The scheme is simple and is efficient in both space and computation. We construct another secret sharing scheme  that achieves the optimal decoding bandwidth when all nodes are available. The scheme has an advantage in implementation because its codewords form the Reed-Solomon codes. In the application of distributed storage, the proposed communication efficient secret sharing schemes also improve disk access efficiency. There are a number of interesting open problems: 1) in the application of distributed storage, how can one construct codes that are communication efficient in terms of both decoding and repair? 2) how to generalize the results to other (non-threshold) access structures? and 3) is it possible to extend the schemes and ideas in the paper to improve the communication efficiency of other secure protocols that use secret sharing schemes as building blocks?

\section*{Acknowledgment}
We thank an anonymous reviewer for pointing us to the paper by Wang and Wong \cite{Wang08}.

%

\ifCLASSOPTIONcaptionsoff
  \newpage
\fi



%
\bibliographystyle{IEEEtranS}
\bibliography{sha}

\end{document}